\documentclass[lettersize,journal]{IEEEtran}
\usepackage{amsmath,amsfonts}
\usepackage{ragged2e}
\usepackage{siunitx}
\usepackage{caption}
\usepackage{amsmath}
\usepackage{amssymb}
\usepackage{array}
\usepackage{textcomp}
\usepackage{stfloats}
\usepackage{url}
\usepackage{adjustbox}
\usepackage{verbatim}
\usepackage{graphicx}
\usepackage{cite}
\usepackage{tikz}
\usepackage[flushleft]{threeparttable}
\usepackage{amsthm}
\usepackage{subcaption}
\usepackage[normalem]{ulem}
\usepackage{etoolbox}
\usepackage{amssymb}
\newtheorem{remark}{Remark}

\usepackage{mathtools}

\usepackage{breqn}
\usepackage[multiple]{footmisc}

\definecolor{ForestGreen}{RGB}{34,139,34}
\usepackage{bbm}
\newtheorem{theorem}{Theorem}
\newtheorem{lemma}{Lemma}

\DeclareMathOperator*{\argmax}{arg\,max}

\allowdisplaybreaks

\usepackage{algorithm}
\usepackage{algpseudocode}
\usepackage{setspace}
\usepackage[font=footnotesize,labelfont=bf,skip=5pt]{caption}
\captionsetup{belowskip=-2pt}
\usepackage{subcaption}
\usepackage{enumitem}
\usepackage{bigints}
\usepackage[colorlinks]{hyperref}
\hypersetup{
	%colorlinks   = true, %Colours links instead of ugly boxes
	urlcolor     = magenta, %Colour for external hyperlinks
	linkcolor    = blue, %Colour of internal links
	filecolor	    = cyan, % Color of file links
	citecolor   = {blue}, %Colour of citations
}
\newcommand\encircle[1]{%
	\tikz[baseline=(X.base)] 
	\node (X) [draw, shape=circle, inner sep=0] {\strut #1};}
\usepackage{scalerel}[2016-12-29]

\usepackage{wrapfig}
\usepackage[outdir=./]{epstopdf}
\begin{document}
\title{\vspace{-0.7cm}Performance Analysis of Multi-IRS Aided Multiple Operator Systems at mmWave Frequencies}
	\author{Souradeep Ghosh, L. Yashvanth,~\IEEEmembership{Student Member,~IEEE} and Chandra R. Murthy,~\IEEEmembership{Fellow,~IEEE}
		\thanks{S. Ghosh is with Qualcomm Inc., India. He was with the Dept. of ECE, Indian Institute of Science (IISc), Bengaluru, during the development of this work. L. Yashvanth and C. R. Murthy are with the Dept. of ECE, IISc, Bengaluru, India, 560012. (e-mails: souradeep2210@gmail.com, yashvanthl@iisc.ac.in, cmurthy@iisc.ac.in).}
		\thanks{The work of L. Yashvanth and Chandra R. Murthy was financially supported by the Prime Minister's Research Fellowship, Govt. of India and the Qualcomm 6G UR India Grant, respectively.}
		\thanks{S. Ghosh and L. Yashvanth contributed equally to this work.}
		\vspace{-0.58cm}
	}
        \maketitle
	\begin{abstract}
         Intelligent reflecting surfaces (IRSs) are envisioned to enhance the performance of mmWave wireless systems. In practice, multiple mobile operators (MO) coexist in an area and provide simultaneous and independent services to user-equipments (UEs) on different frequency bands. Then, if each MO deploys an IRS to enhance its performance, the IRSs also alter the channels of UEs of other MOs. In this context, this paper addresses the following questions: can an MO still continue to control its IRS independently of other MOs and IRSs? Is joint optimization of IRSs deployed by different MOs and inter-MO cooperation needed? To that end, by considering the mmWave bands, we first derive the ergodic sum spectral efficiency (SE) in a $2$-MO system for the following schemes: $1$) joint optimization of an overall phase angle of the IRSs with MO cooperation, $2$) MO cooperation via time-sharing, and $3$) no cooperation between the MOs. We find that even with no cooperation between the MOs, the performance of a given MO is not degraded by the presence of an out-of-band (OOB) MO deploying and independently controlling its own IRS. 
         On the other hand, the SE gain obtained at a given MO using joint optimization and cooperation over the no-cooperation scheme decreases inversely with the number of elements in the IRS deployed by the other MO. We generalize our results to a multiple MO setup and show that the gain in the sum-SE over the no-cooperation case increases at least linearly with the number of OOB MOs. Finally, we numerically verify our findings and conclude that every MO can independently operate and tune its IRS; cooperation via optimizing an overall phase only brings marginal benefits in practice.                  \end{abstract}
	\begin{IEEEkeywords}
		Intelligent reflecting surfaces, out-of-band performance, mmWave communications, multiple operators.
	\end{IEEEkeywords}
	\vspace{-0.3cm}
	\section{Introduction}
        \vspace{-0.1cm}
       Millimeter-wave (mmWave) frequency bands have been incorporated into current wireless standards to enable high data rates by leveraging the availability of large bandwidths~\cite{Dahlman_CSMag_2017}. However, a concern with the use of mmWave bands is the high propagation loss, which limits cellular coverage. Intelligent reflecting surfaces (IRSs) have recently been introduced to tackle this issue by providing virtual line-of-sight paths~\cite{RuiZhang_IRSSurvey_TCOM_2021}. Further, in real-life scenarios, multiple \textcolor{black}{mobile operators} (MOs) using different and non-overlapping frequency bands coexist in a geographical area and provide independent services to different \textcolor{black}{user equipments} (UEs) that are subscribed to them. In such a scenario, since an IRS is a passive device and does not contain a bandpass filter, it reflects the signals of every MO in the system. Thus, it remains unclear whether MOs can independently optimize their IRSs for their UEs or if cooperation among MOs is required due to the presence of IRSs. This paper addresses these issues and offers insights into multiple MO systems aided by IRSs in the mmWave bands.
       \vspace{-0.2cm}
 \subsection{Related Work \& Motivation}
 \textcolor{black}{The IRS literature has seen significant growth in recent years. For instance,\cite{RuiZhang_TWC_2019} and~\cite{Wang_TVT_2020} investigate the joint design of active and passive beamforming for sub-6 GHz and mmWave systems, respectively. In~\cite{Zheng_WCL_2020}, the authors propose channel estimation and IRS phase optimization techniques tailored to orthogonal frequency division multiplexing (OFDM)-based systems. The work in~\cite{Schober_TWC_2021} explores multiple access schemes in IRS-assisted networks, while~\cite{IRSvsRelay_WCL_2020} demonstrates the performance gains of IRSs over conventional relays. Additionally,~\cite{Rui_Security_WCL_2019} highlights the potential of IRSs in enhancing physical-layer security. The study in~\cite{Cunhua_TWC_2020} extends IRS deployment to multi-cell environments, and~\cite{WeiYu_JSAC_2021} introduces machine learning-based techniques for optimizing IRS reflection coefficients. A comprehensive review of the applications and benefits of IRSs in mmWave systems can be found in~\cite{RuiZhang_IRSSurvey_TCOM_2021,Liu_CST_2021_IRS_survey_2,Gong_CST_2020_IRS_survey_3}.}

In~\cite{Liu_ICCC_2022} and~\cite{Yang_ICC_2020}, hybrid beamforming architectures were proposed for IRS-aided mmWave systems using instantaneous and statistical channel state information (CSI), respectively. Following this, \cite{Zhao_TWC_2022_Jo_Pow_alloc} and~\cite{Song_TVT_ICI_Suppressing} solved for optimal power control coefficients/UE associations and IRS configurations, respectively, to maximize the sum-rate of UEs in multiple-IRS setups. 
In~\cite{Wang_TWC_2022_IRS_BA}, a beam training problem for IRSs exploiting the channel sparsity was solved, and in~\cite{Lin_TCOM_2022_Ch_Est_IRS} and~\cite{Yashvanth_GLOBECOMM_2022}, novel CSI estimation techniques were proposed for centralized and distributed IRS setups, respectively. However, all these works implicitly assume the presence of only one MO that deploys and controls the IRS. \textcolor{black}{The problem becomes more challenging when we consider more than one MO in the system. To explain, MOs are typically allotted non-overlapping frequency bands centered at nearby carrier frequencies to provide service to the UEs subscribed to them. So, the MO that deploys and controls the IRS (called the in-band MO) tunes the IRS phase configuration in the frequency band allotted to the MO to best serve its own UEs. However, the IRSs are passive, i.e., they do not contain any active signal processing/RF circuitry such as a band-pass filter to selectively reflect signals whose frequency content lies only within the band allotted to the in-band operator. As a result, any other MO providing service in the same geographical area in a nearby frequency allocation will naturally have its signals reflected off the IRS with an arbitrary phase shift. For example, the n$257$ band in $5$G new radio (NR) operates in the mmWave frequencies and spans $26.5$-$29.5$ GHz, i.e., a bandwidth of $3$ GHz~\cite[Table 5.2-1]{3gpp_NR_UE_txn}. Given that the maximum carrier bandwidth in $5$G NR is $400$ MHz, the n$257$ band could be used by at least $7$ different service providers/MOs (and more in geographies where the allotted bandwidth to each MO is less than $400$ MHz). Then, since these MOs use the same frequency range to provide services to their UEs, but using non-overlapping frequency bands, the IRS elements will reflect signals impinging on them from all MOs with similar efficiency.} 

In this context, \cite{gurgunouglu_TCOM_inter_MO_CSI_extended} studied CSI estimation in IRS-aided multiple MO systems, and~\cite{Lodro_IRS_Experimental_2022} experimentally evaluated the performance impact of IRS in the presence of multiple MOs. Further, \cite{Cai_2022_TCOM,Cai_2020_CL_practicalIRS,Xinyi_M_Imran_TVT_2024} and~\cite{Hoacheng_arxiv_2024,Schwarz_CL_2024,Wenhao_TVT_2023,Yajun_IEEE_Access,Joana_Globecom_2023} considered joint optimization of the IRS configurations and allocation of disjoint IRSs/sub-IRSs to different bands/MOs via cooperation. \textcolor{black}{However, most of these works assume that complete CSI for all links is available at all MOs, necessitating extensive inter-MO cooperation. Such cooperation between MOs is often infeasible in practice. Moreover, the precise benefit of jointly controlling all IRSs in terms of enhancing the overall network performance across multiple MOs remains largely unknown.}
Our prior works~\cite{Yashvanth_TCOM_2024,Yashvanth_WCL_2024} quantified the out-of-band performance impact of the IRS when only one MO deploys one or more IRSs, and~\cite{miridakis_WCL_2024_impactMOs} considered setups where each MO deploys its own IRS in the sub-$6$ GHz band. \textcolor{black}{Table~\ref{tab:lit_rev} presents a summary of this work and the prior studies on IRS-assisted multiple-MO systems. For the first time in the literature, we explore several aspects of IRS-aided multiple MO systems in mmWaves: quantifying the benefits of inter-MO cooperation and joint optimization, low-complexity algorithmic solutions for such cooperation, and an in-depth performance analysis with closed-form expressions. Notably, the closed-form expressions offer useful insights into the behavior and limits of such systems under various \emph{practical} transmission schemes. Further, we make these contributions without compromising the generality of the system model.} 

\begin{table*}[t!]
\color{black}
    \centering
    \caption{\textcolor{black}{Summary of literature on IRS-aided multiple MO systems.}} \label{tab:lit_rev}
    \begin{threeparttable}
    \begin{adjustbox}{width=\linewidth}
\begin{tabular}{|l|c|c|c|c|c|c|c|c|c|c|c|c|c|c|c|c|c|c|c|c|c|c|c|c|}
    \hline
   &\cite{RuiZhang_TWC_2019}  &\cite{Wang_TVT_2020} &\cite{Schober_TWC_2021}  &\cite{gurgunouglu_TCOM_inter_MO_CSI_extended} &\cite{Lodro_IRS_Experimental_2022}  &\cite{Cai_2022_TCOM} &\cite{Cai_2020_CL_practicalIRS}  &\cite{Xinyi_M_Imran_TVT_2024}  &\cite{Hoacheng_arxiv_2024}   &\cite{Schwarz_CL_2024}  &\cite{Wenhao_TVT_2023}  &\cite{Yajun_IEEE_Access}  &\cite{Joana_Globecom_2023}  & \cite{Yashvanth_TCOM_2024} &\cite{Yashvanth_WCL_2024}   &\cite{miridakis_WCL_2024_impactMOs}  & \begin{tabular}{@{}c@{}}This \\ work\end{tabular}\\ [0.5ex]
 \hline
Frequency band\tnote{$\boldsymbol{\dagger}$}  &  S$6$ & M &  S$6$ & S$6$     &S$6$      &S$6$    & S$6$  &  M & S$6$    & M  & S$6$  &  M &M  &   \!S$6$, M\!  & M  &  S$6$ &M \\ \hline
 
More than one MO? &  &  &  & \checkmark &\checkmark  &\checkmark    &    & \checkmark  & \checkmark &  \checkmark&  \checkmark&\checkmark & \checkmark  &\checkmark  &  \checkmark     &  \checkmark& \checkmark \\ \hline

Multiple UEs per MO? & {\checkmark} &  &\checkmark & &   &\checkmark   &  &  \checkmark & \checkmark  & \checkmark & \checkmark &   &  & \checkmark&  \checkmark     &  &\checkmark   \\ \hline

Multiple access scheme\tnote{$\boldsymbol{\diamond}$} & T  &--- &T, F  &---       & --- &  S & --- &S    &S  &S  & S & --- &--- &T &T   &    ---&T  \\ \hline

CSI requirement/exchange\tnote{$\boldsymbol{\#}$}  &F  & F &F &  F&  P&F     & F & F &F  &P  &P  & P&  F& P &P   &  P    &P  \\ \hline 

Multiple IRSs? & & \checkmark& &  &    &    &  &\checkmark  &  \checkmark & \checkmark &  \checkmark &  & &  &\checkmark   &         &\checkmark  \\ \hline
 
{\bf Low-complexity solution} &  & & &   &\checkmark  & &  &  &  &  \checkmark & \checkmark &  &  \checkmark&\checkmark  &\checkmark        & \checkmark  &\checkmark   \\ \hline

\begin{tabular}{@{}l@{}}{\bf MO-cooperative joint } \\ {\bf optimization schemes}\end{tabular}  &  &  &   &  & & \checkmark &\checkmark  & \checkmark &  &  &  &  &  &  &      &   &\checkmark   \\ \hline

\begin{tabular}{@{}l@{}}{\bf MO-cooperative resource} \\ {\bf \hspace{0.0cm}sharing schemes}\end{tabular} &  & & &     & &  &  & \checkmark &  &  &  & \checkmark &\checkmark  &  &      &   &\checkmark   \\ \hline

\begin{tabular}{@{}l@{}}{\bf \hspace{0.0cm}Closed form } \\ {\bf performance analysis}\end{tabular} & \checkmark &\checkmark & & \checkmark&     &  &  &    &   &   &   &  &  &\checkmark  &   \checkmark   &\checkmark  &\checkmark  \\ \hline

\begin{tabular}{@{}l@{}}{\bf Quantifying the effects} \\ {\bf of OOB IRSs}\end{tabular} &  & & & \checkmark &    &   &  &   &   &  &   &     &  &     \checkmark&   \checkmark& \checkmark &\checkmark   \\ \hline

\begin{tabular}{@{}l@{}}{\bf Quantifying the gains of} \\ {\bf inter-MO cooperation}\end{tabular} & & & &  &   &  &   &  &   &     &  &  & &     &   &   &\checkmark  \\ \hline

\end{tabular}
\end{adjustbox}
\begin{tablenotes}
  \item[$\boldsymbol{\dagger}$] M: mmWave bands; S$6$: sub-$6$ GHz bands
  \item[$\boldsymbol{\diamond}$] T: Time division multiple access; F: Frequency division multiple access; S: Space division multiple access
  \item[$\boldsymbol{\#}$] F: Full CSI required/exchanged; P: Partial CSI required/exchanged
  \end{tablenotes}
\end{threeparttable}
\end{table*}
%\color{black}
	\vspace{-0.3cm}
	\subsection{Contributions}\label{sec:contributions}    
       To set the context, we use the following terminology: the IRS and UEs controlled/served by an MO of interest are termed \emph{in-band}, and other IRSs/UEs in the system are called \emph{out-of-band (OOB)} with respect to this same MO. We make the following key contributions in this paper:
\begin{enumerate}[leftmargin=*]
\item Considering that $2$ MOs, X and Y, control an IRS each, we derive the ergodic sum spectral-efficiency (SE) of the MOs when an overall phase at each IRS is configured as per the following implementation schemes (see Theorem~\ref{thm:theorem-1}):
\begin{enumerate}
\item \emph{Optimization with Time-sharing}: In each time slot, while an MO serves its own UE, the overall phases at the IRSs are optimized to a UE served by either MO-X or MO-Y.
\item \emph{Joint-optimization with MO cooperation}: The overall IRS phases are jointly tuned to maximize the weighted sum-SE of UEs scheduled by MOs in every time slot. 
\item \emph{No MO cooperation}: In this scheme, each MO focuses exclusively on optimizing its IRSs to ensure coherent signal reception at only its own UEs. 
\end{enumerate}
\item We show that the IRS controlled by one MO does not degrade the sum-SE of the other MO. We quantify the gain in the sum-SE of the MOs obtained {with/without OOB IRS}, and with/without cooperation (for time-sharing/joint optimization) as a function of the number of OOB IRS elements (see Theorem~\ref{theorem-2}.)
\item We next extend our results to a system where more than $2$ MOs co-exist, which deploy and control an IRS each. In particular, we derive the ergodic-sum-SE of the MOs for the above-mentioned three schemes (see Theorem~\ref{corollary:1}.)
\item Finally, even with more than $2$ MOs, we show that the OOB IRSs do not degrade the in-band performance. Further, although joint optimization/time sharing with MO cooperation still offers marginal gains relative to sum-SE when the MOs do not cooperate, the gain increases at least linearly with the number of OOB MOs. (see Theorem~\ref{corollary:2}.)
\end{enumerate}
		Our results explicitly uncover the dependence of the ergodic sum-SE of the MOs in IRS-aided mmWave systems on system parameters such as the number of IRS elements, in-band and OOB cascaded channel paths, SNR of operation, etc. 
		
We numerically validate our analytical results and illustrate that joint optimization/cooperation among MOs provides marginal gains compared to when an MO configures its IRS without any cooperation.  For instance, with $2$-MOs, each with an $N=16$-element IRSs, the gain in the SE of an MO obtained at $80$ dB transmit SNR via joint optimization and cooperation over a no-cooperation policy is about $2\%$. Also, this improvement monotonically decreases with the number of OOB IRS elements and transmit SNR; for e.g., it is $0.4\%$ and $0.08\%$ for $N=32,64$, respectively. As a result, cooperation between MOs to optimize the IRSs may not be needed in IRS-aided mmWave systems with multiple MOs. Each MO can deploy and control its IRS independently, and the IRS of one MO does not degrade the performance of another MO. 
		
		\indent \textcolor{black}{\emph{Notation:} 
$|\cdot|,\angle\cdot$ denote the magnitude and phase of a complex number (vector); ${(\cdot)^*}$ stands for complex conjugation; $\mathbbm{1}_{\{\cdot\}}$ is the indicator function; $|\mathcal{A}|$ is the cardinality of set $\mathcal{A}$; $A \stackrel{d}{=} B$ means $A$ and $B$ have the same distribution; $\odot$ is the Hadamard product; $\mathcal{U}$ denotes uniform distribution; $\mathcal{CN}$ denotes circularly symmetric complex Gaussian distribution; $\text {Bin}$ denotes binomial distribution; $\Re(\cdot)$ and $\mathbb{R}^+$ are the real part and set of positive real numbers; ${ \sf{Pr}}(\cdot)$ and $\langle \cdot 
\rangle \triangleq \mathbb{E}[\cdot]$ refer to the probability measure and expectations. $\mathcal{O}(\cdot)$ and $\Gamma(\cdot)$ denote the Landau's Big-O function and the Gamma function.} \!\!
\vspace{-0.3cm}
\section{System Model and Problem Description}\label{sec:system_model}
    Multiple MOs operating over different and non-overlapping mmWave bands co-exist in a given geographical area and provide services to the UEs subscribed to them. \textcolor{black}{For mathematical brevity, we describe the model for a system with two MOs, say X and Y, but the model directly extends to any number of MOs, as we describe in Sec.~\ref{sec:M-BS & M-IRS System}.}  
     The MOs X and Y operate over non-overlapping frequency bands, and both use time-division multiple access (TDMA) to serve one of $K$ UEs on a frequency band centered at $f_1$ and one of $Q$ UEs on a frequency band centered at $f_2$, respectively, in each time slot. 
     Also, their base stations,\footnote{\textcolor{black}{For simplicity, we use single antenna BSs in this work, similar to~\cite{Cai_2020_CL_practicalIRS,Schwarz_CL_2024}. However, our results can be extended to multiple antenna cases also.}} BS-X and BS-Y, deploy and control an $N_1$-element IRS-X and an $N_2$-element IRS-Y, respectively. Due to the high attenuation in the mmWaves, the direct links between the BSs and UEs are blocked {\cite{Lin_TCOM_2022_Ch_Est_IRS,Cao_ISPIMRC_2020}}.
     The downlink signal received at UE-$k$, served by MO-X, is given by
     \color{black}
        \begin{equation} \label{eq:kth_downlink_signal}
        y_{k} = \left(\mathbf{g}_{\text{X}k}^T\boldsymbol{\Theta_1}\mathbf{f}_{\text{XX}} + \mathbf{g}_{\text{Y}k}^T\boldsymbol{\Theta_2}\mathbf{f}_{\text{XY}}\right)x_k + n_k,
    \end{equation} 
    where $\mathbf{g}_{\text{X}k} \in \mathbb{C}^{N_1}$ and $\mathbf{g}_{\text{Y}k} \in \mathbb{C}^{N_2}$ are the channels from IRS-X and IRS-Y to UE-$k$, respectively; $\mathbf{f}_{\text{XX}} \in \mathbb{C}^{N_1}$ and $\mathbf{f}_{\text{XY}} \in \mathbb{C}^{N_2}$ are the channels from BS-X to IRS-X and IRS-Y, respectively, 
     \color{black}
    $x_k$ is the information symbol for UE-$k$ with average power constraint $\mathbb{E}[|x_k|^2] \leq P$ and $n_k\sim \mathcal{CN}(0,\sigma^2)$ is the additive noise at UE-$k$.  
    Similarly, the downlink signal from BS-Y to UE-$q$, served by MO-Y, can be written as
    \color{black}
    \begin{equation}\label{eq:qth_downlink_signal}
        y_{q} = \left(\mathbf{t}_{\text{X}q}^T\boldsymbol{\Theta_1}\mathbf{f}_{\text{YX}} +\mathbf{t}_{\text{Y}q}^T\boldsymbol{\Theta_2}\mathbf{f}_{\text{YY}}\right)x_q + n_q,
    \end{equation} where $\mathbf{t}_{\text{X}q} \in \mathbb{C}^{N_1}$ and $\mathbf{t}_{\text{Y}q} \in \mathbb{C}^{N_2}$ are the channels from IRS-X and IRS-Y to UE-$q$, respectively, $\mathbf{f}_{\text{YX}} \in \mathbb{C}^{N_1}$ and $\mathbf{f}_{\text{YY}} \in \mathbb{C}^{N_2}$ are the channels from BS-Y to IRS-X and IRS-Y, respectively, 
    \color{black}
    $x_q$ is the information symbol for UE-$q$ with power constraint $\mathbb{E}[|x_q|^2] \leq P$ and $n_q\sim \mathcal{CN}(0,\sigma^2)$ is the additive noise at UE-$q$. \textcolor{black}{In particular, $\mathbf{f}_{ab}$ denotes the channel from BS-$a$ to IRS-$b$, $\mathbf{g}_{cd}$ denotes the channel from IRS-$c$ to the $d$th UE served by MO-X, and $\mathbf{t}_{cd}$ denotes the channel from IRS-$c$ to the $d$th UE served by MO-Y.}
     Finally, $\boldsymbol{\Theta_1} \in \mathbb{C}^{N_1 \times N_1}$ and $\boldsymbol{\Theta_2} \in \mathbb{C}^{N_2 \times N_2}$ are diagonal matrices with unit modulus reflection coefficients of IRS-X and IRS-Y, respectively. Figure~\ref{System_Model_2_Operator_2_IRS} illustrates our system model. 
%    \vspace{-0.2cm}
    
   \indent \emph{Terminology:} Since MO-X configures IRS-X to serve UE-$k$, we refer to the IRS-X and UE-$k$ as the \emph{in-band} IRS and UE, respectively, from MO-X's viewpoint. Similarly, the IRSs or UEs that are not controlled/served by the BS-X (operating on a different band) are \emph{out-of-band (OOB)} nodes from MO-X's viewpoint. Further, the link from MO-X to UE-$k$ via IRS-X is the in-band channel; the links from MO-X to UE-$k$ via OOB IRSs are OOB channels. These apply to other MOs also.
   \vspace{-0.1cm}
\begin{figure}[t!]
			\vspace{-0.1cm}
		\centering
		\includegraphics[width=\linewidth]{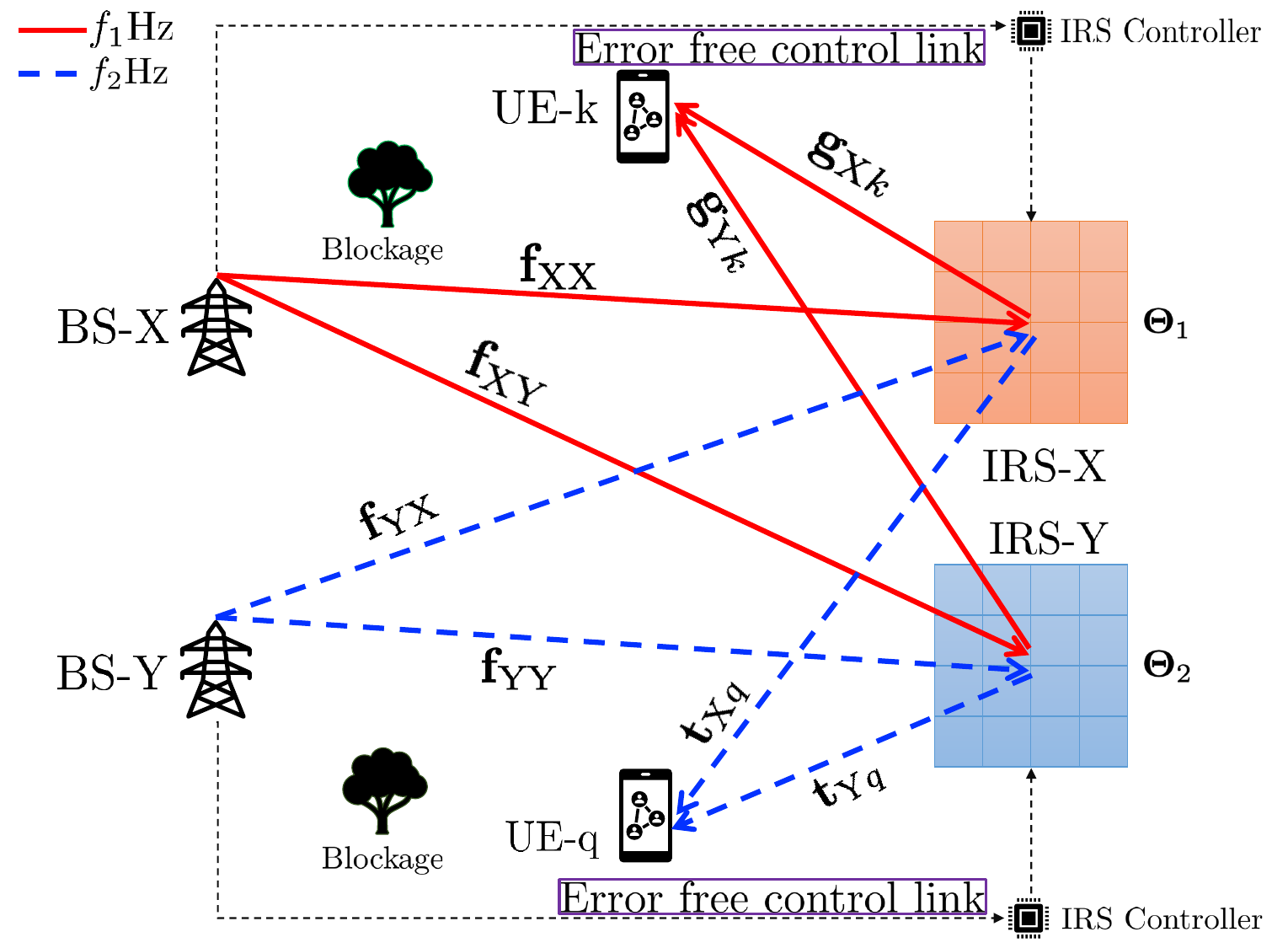}		
		\caption{\textcolor{black}{System Model of 2-BS \& 2-IRS system.}}
		\label{System_Model_2_Operator_2_IRS}
	\end{figure}
\subsection{Channel Model}
 We use the standard Saleh-Venezuela (SV) model to describe the channels in the mmWave frequency bands \textcolor{black}{\cite{Wang_TWC_2022_IRS_BA,Lin_TCOM_2022_Ch_Est_IRS}}. The channel from BS-$Z$ to IRS-$W$ ($Z, W \in \{ \text{X}, \text{Y}\}$) is
    \begin{equation}\label{eq:ch_model_mmwave_double_IRS_2}
    	\vspace{-0.2cm}
	\color{black}
        \mathbf{f}_{ZW} = \sqrt{N_p\Big/L^{(p)}_{Z}}\sum\nolimits_{l=1}^{L^{(p)}_{Z}}\gamma^{(p)}_{l,Z} \mathbf{a}^*_{N_p}(\phi^{(p)}_{l,Z}),
    \end{equation} 
    \color{black}
    where  $p = 1 \cdot \mathbbm{1}_{\{W = \text{X}\}} + 2\cdot \mathbbm{1}_{\{W = \text{Y}\}}$, 
    $L^{(p)}_{Z}$ is the number of resolvable paths from BS-$Z$ to the IRS-$W$ and $\phi^{(p)}_{l,Z}$ is the sine of the angle of arrival of the $l$th path from BS-$Z$ to IRS-$W$. Similarly, the channel from IRS-$W$ to UE-$r$ served by MO-X/MO-Y is given by
    	\vspace{-0.1cm}
	\color{black}
    \begin{equation}\label{eq:ch_model_mmwave_double_IRS_1}
    		\vspace{-0.2cm}
       \mathbf{g}_{Wr} \Big/ \mathbf{t}_{Wr} = \sqrt{N_p\Big/L^{(p)}_{r}} \sum\nolimits_{l=1}^{L^{(p)}_{r}}\gamma^{(p)}_{l,r} \mathbf{a}^*_{N_p}(\psi^{(p)}_{l,r}),
    \end{equation} 
    \color{black}
    where $L^{(p)}_{r}$ is the number of resolvable paths from IRS-$W$ to UE-$r$, $\psi^{(p)}_{l,r}$ is the sine of angle of departure of $l$th path from IRS-$W$ to UE-$r$, and $N_p$ is the number of IRS elements in IRS-$W$. The sine terms are sampled from an appropriate distribution $\mathcal{P}_{\mathcal{A}}$ which is discussed in the sequel. The fading coefficients, $\gamma^{(p)}_{l,Z}$ and $\gamma^{(p)}_{l,r}$ are independently sampled from $\mathcal{CN}(0,\beta^{(p)}_{Z})$ and $\mathcal{CN}(0,\beta^{(p)}_{r})$, respectively, where $\beta^{(p)}_{Z}$ and $\beta^{(p)}_{r}$ denote the path loss in BS-IRS and IRS-UE links, respectively. 
       Finally, we consider an $N$-element uniform linear array (ULA) based IRS,\footnote{Similar results as in this paper can be easily obtained for other array types such as planar arrays also.} similar to~\cite{Wei_Yu_TWC_2023_Sparse_Ch_estim_IRS}, with half-wavelength inter-element spacing; its array response vector $\mathbf{a}_{N}(\psi)$ is 
    \begin{equation} \label{eq:array_response}
        \mathbf{a}_N(\psi) =\dfrac{1}{\sqrt{N}} [1, e^{-j\pi\psi},\ldots,e^{-j(N-1)\pi\psi}]^T \in \mathbb{C}^{N}.
  \vspace{-0.15cm}
	    \end{equation}
\subsubsection{Cascaded Channel Representation}
Substituting the expressions for individual channels given in~\eqref{eq:ch_model_mmwave_double_IRS_2}, \eqref{eq:ch_model_mmwave_double_IRS_1} into~\eqref{eq:kth_downlink_signal}, the channel at UE-$k$ can be simplified as in~\eqref{kth_channel_eq}, \eqref{kth_channel_eq_2} on the top of the page, 
\begin{figure*}
	\vspace{-0.85cm}
	\begin{align}
		h_k  & = \frac{N_1}{\sqrt{L_{k1}}}\sum_{l=1}^{L_{k1}}\gamma^{(1)}_{l,X}\gamma^{(1)}_{l,k}\mathbf{a}_{N_1}^H(\psi^{(1)}_{l,k})\boldsymbol{\Theta_1}\mathbf{a}_{N_1}^*(\phi^{(1)}_{l,X})  
		+\frac{N_2}{\sqrt{L_{k2}}}\sum_{l=1}^{L_{k2}}\gamma^{(2)}_{l,X}\gamma^{(2)}_{l,k}\mathbf{a}_{N_2}^H(\psi^{(2)}_{l,k})\boldsymbol{\Theta_2}\mathbf{a}_{N_2}^*(\phi^{(2)}_{l,X}) \label{kth_channel_eq}  \\ 
		&\stackrel{(a)}{=} \frac{N_1}{\sqrt{L_{k1}}}\sum_{l=1}^{L_{k1}} \gamma^{(1)}_{l,X}\gamma^{(1)}_{l,k}\left(\mathbf{a}_{N_1}^H(\phi^{(1)}_{l,X})\odot\mathbf{a}_{N_1}^H(\psi^{(1)}_{l,k})\right)\boldsymbol{\theta_1} 
		+ \frac{N_2}{\sqrt{L_{k2}}}\sum_{l=1}^{L_{k2}} \gamma^{(2)}_{l,X}\gamma^{(2)}_{l,k}\left(\mathbf{a}_{N_2}^H(\phi^{(2)}_{l,X})\odot\mathbf{a}_{N_2}^H(\psi^{(2)}_{l,k})\right)\boldsymbol{\theta_2} \label{kth_channel_eq_2},  
	\end{align}
	\vspace{-0.2cm}
	\hrule
	\vspace{-0.6cm}
\end{figure*}
where $L_{k1} \triangleq L_{X}^{(1)}L_{k}^{(1)} $ is the number of resolvable in-band paths from BS-X to UE-$k$ through IRS-X (\textcolor{black}{see Sec.~\ref{sec_angle_book_dist} for details on resolvability}), $L_{k2} \triangleq L_{X}^{(2)}L_{k}^{(2)}$ is the number of resolvable OOB paths from BS-X to UE-$k$ through IRS-Y, $\boldsymbol{\theta_1}=\text{diag}(\boldsymbol{\Theta_1})\in\mathbb{C}^{N_1}$, $\boldsymbol{\theta_2}=\text{diag}(\boldsymbol{\Theta_2})\in\mathbb{C}^{N_2}$, and $(a)$ is obtained using the properties of the Hadamard product. The first and second terms in~\eqref{kth_channel_eq}, \eqref{kth_channel_eq_2} represent the effective channels through the in-band IRS-X and the OOB IRS-Y, respectively. Since the Hadamard product of two array vectors is also an array vector aligned to a different angle, we have
\vspace{-0.3cm}
\begin{multline}\label{kth_channel_eq_1}
	h_k  = \frac{N_1}{\sqrt{L_{k1}}}\sum_{l=1}^{L_{k1}} \gamma^{(1)}_{l,X}\gamma^{(1)}_{l,k}\mathbf{\dot{a}}_{N_1}^H(\omega^{(1)}_{X,k,l})\boldsymbol{\theta_1}  \\
	+ \frac{N_2}{\sqrt{L_{k2}}}\sum_{l=1}^{L_{k2}} \gamma^{(2)}_{l,X}\gamma^{(2)}_{l,k}\mathbf{\dot{a}}_{N_2}^H(\omega^{(2)}_{X,k,l})\boldsymbol{\theta_2},
	\vspace{-0.3cm}
\end{multline}
\textcolor{black}{where $\omega^{(1)}_{X,k,l} \triangleq \sin_{(p)}^{-1}\Bigl(\sin(\phi^{(1)}_{l,X}) + \sin(\psi^{(1)}_{l,k})\Bigl)$, and $\omega^{(2)}_{X,k,l} \triangleq \sin_{(p)}^{-1}\Bigl(\sin(\phi^{(2)}_{l,X}) + \sin(\psi^{(2)}_{l,k})\Bigl)$ denote the cascaded channel at UE-$k$ from MO-X via IRS-X, and IRS-Y, respectively, in the $l$th path}. Here, $\sin_{(p)}^{-1}(x)$ is defined so that $x \in [-1,1)$, the principal argument of the inverse sine function~\cite{Wei_Yu_TWC_2023_Sparse_Ch_estim_IRS},~\cite[Eq.~$32$]{Yashvanth_TCOM_2024}.
Further, $\mathbf{\dot{a}}_{N}(\omega)\triangleq\frac{1}{\sqrt{N}}\mathbf{a}_{N}(\omega)$, with $\mathbf{a}_N(\omega)$ as defined in \eqref{eq:array_response}. Similarly, the channel at UE-$q$ is 
\vspace{-0.3cm}
\begin{multline}\label{qth_channel_eq_1}
\vspace{-0.1cm}
	h_q = \frac{N_1}{\sqrt{L_{q1}}}\sum_{l=1}^{L_{q1}} \gamma^{(1)}_{l,Y}\gamma^{(1)}_{l,q}\mathbf{\dot{a}}_{N_1}^H(\omega^{(1)}_{Y,q,l})\boldsymbol{\theta_1}  \\
	+ \frac{N_2}{\sqrt{L_{q2}}}\sum_{l=1}^{L_{q2}} \gamma^{(2)}_{l,Y}\gamma^{(2)}_{l,q}\mathbf{\dot{a}}_{N_2}^H(\omega^{(2)}_{Y,q,l})\boldsymbol{\theta_2},
	\vspace{-0.3cm}
\end{multline} 
where  $L_{q1} \triangleq L_{Y}^{(1)}L_{q}^{(1)} $ is the number of resolvable OOB paths from BS-Y to UE-$q$ via IRS-X; $L_{q2} \triangleq L_{Y}^{(2)}L_{q}^{(2)}$ is the number of resolvable in-band paths from BS-Y to UE-$q$ via IRS-Y. 
\subsubsection{\textcolor{black}{Angle Distribution}}\label{sec_angle_book_dist}
    We now explain the distribution of the cascaded angles specified in~\eqref{kth_channel_eq_1},~\eqref{qth_channel_eq_1}. 
        Since an $N$-element ULA can form at most $N$ \emph{resolvable beams}~\cite{Gen_TWC_2016,Wang_TWC_2022_IRS_BA}, the paths with angular separations smaller than the Rayleigh resolution limit, i.e., $2 \pi / N$ radians, are unresolvable and appear as a single path with an appropriate fading coefficient. 
     To that end, we define the set of resolvable beams formed by the IRS as 
    \begin{equation*}
	\mathcal{A} \triangleq \left\{ \mathbf{a}_N(\omega), \omega \in \mathbf{\Omega}\right\}\!; \mathbf{\Omega} \triangleq\! \left\{\! \left(\!-1 + \!\frac{2i}{N}\right)\!\bigg\rvert i =  0,\ldots,N-1\!\right\}\!,\!
\end{equation*} 
where $\mathbf{\Omega}$ is the \emph{resolvable anglebook} of the IRS. Then, we model its distribution $\mathcal{P}\!_{\mathcal{A}}$ by a uniform distribution:
	\vspace{-0.1cm}
\begin{equation}\label{eq:codeboook_ditbn}
		\vspace{-0.1cm}
	\mathcal{P}_{\mathcal{A}}(\omega) = ({1}/{\left| \mathbf{\Omega}\right|})  \cdot \mathbbm{1}_{\left\{\omega \in \mathbf{\Omega}\right\}} = ({1}/{N})\cdot   \mathbbm{1}_{\left\{\omega \in \mathbf{\Omega}\right\}}.
\end{equation} 
Hence, we sample all the cascaded angles, $\{\omega^{(1)}_{X,k,l}\}_l$, $\{\omega^{(2)}_{X,k,l}\}_l$, $\{\omega^{(1)}_{Y,q,l}\}_l$, $\{\omega^{(2)}_{Y,q,l}\}_l$ from $\boldsymbol{\Omega}$ given above, similar to~\cite{Wei_Yu_TWC_2023_Sparse_Ch_estim_IRS}. Furthermore, since IRS-X(Y) forms at most $N_1$ ($N_2$) resolvable paths, we have $L_{k1}, L_{q1}\leq N_1$;  $L_{k2}, L_{q2}\leq N_2$.

\subsection{Choice of IRS Configurations}
Recall that MO-X controls the (in-band) IRS-X to optimally serve its UEs, while BS-X cannot directly control the (OOB) IRS-Y. Similarly, MO-Y controls the (in-band) IRS-Y, and (OOB) IRS-X is not directly controllable by BS-Y. In such a scenario, at any instant in time, one of the cascaded in-band paths (in \eqref{kth_channel_eq_1} and \eqref{qth_channel_eq_1}) 
contains the maximum energy, and aligning the in-band IRSs to that path will procure near optimal benefits~\cite{Wang_TVT_2020}.
Without loss of generality, \textcolor{black}{we label the strongest path as the first in-band path}. Then, the strongest in-band cascaded path of UE-$k$ is $h_{k,1} \triangleq \frac{N_1}{\sqrt{L_{k1}}} \gamma^{(1)}_{1,X}\gamma^{(1)}_{1,k}\mathbf{\dot{a}}_{N_1}^H(\omega^{(1)}_{X,k,1})\boldsymbol{\theta_1}$. Further, recall that each MO prioritizes optimizing its IRS to align it along the in-band channel at its scheduled UE. In particular, since MO-X controls $\boldsymbol{\theta_1}$, to maximize the the \textcolor{black}{channel gain} $|h_{k,1}|^2$, by using Cauchy-Schwartz (CS) inequality, the $n$th entry of the optimal IRS configuration vector $\boldsymbol{\theta_1}^{\mathrm{opt}}$  is $\theta_{1,n} = e^{j\phi_1} e^{j\left(-\angle \gamma^{(1)}_{1,X}-\angle \gamma^{(1)}_{1,k}-\pi(n-1)\omega^{(1)}_{X,k,1} \right)}$, where $\phi_1$ is an overall phase angle applied to IRS-X which still preserves the optimality.\footnote{For e.g., it can be chosen to phase-align the channel $h_{k,1}$ with the overall virtual ``direct path'' formed by the cascaded channel through IRS-Y, i.e., with the phase of the second term in \eqref{kth_channel_eq_1}. We will explain this in the sequel.} Similarly, we can obtain the optimal configuration for IRS-Y that maximizes $|h_{q,1}|^2$. Thus, the optimal IRS phase vectors can be written compactly as~\cite{Yashvanth_TCOM_2024} 
    \vspace{-0.1cm}
    \begin{align}
    	   	\vspace{-0.1cm}
		\boldsymbol{\theta_1}\!^{\mathrm{opt}} &= \frac{{\gamma^{(1)*}_{1,X}}\:{\gamma^{(1)*}_{1,k}}}{\left|\gamma^{(1)}_{1,X}\gamma^{(1)}_{1,k} \right|} \times N_1 \mathbf{\dot{a}}_{N_1}(\omega^{(1)}_{X,k,1}) \times e^{j\phi_1}, \label{eq:optimum_vector_k}\\
		\boldsymbol{\theta_2}\!^{\mathrm{opt}} &= \frac{{\gamma^{(2)*}_{1,Y}}\:{\gamma^{(2)*}_{1,q}}}{\left|\gamma^{(2)}_{1,Y}\gamma^{(2)}_{1,q} \right|} \times N_2 \mathbf{\dot{a}}_{N_2}(\omega^{(2)}_{Y,q,1})\times e^{j\phi_2},\label{eq:optimum_vector_q}
    \end{align} 
respectively, and the choice of $\phi_1$, $\phi_2$ will be explained next.

\begin{table*}[t!]
\color{black}
\centering
\caption{\textcolor{black}{Notation}.}
\label{oob_work_multiple_IRS_Tab:table_acronym_paper}
\resizebox{\textwidth}{!}{%
\begin{tabular}{ | m{1.2cm} | l | m{1.2cm} | l |}
\hline
\!\!\textbf{Variable} & \textbf{Definition}                & \!\!\textbf{Variable} & \textbf{Definition}              \\ \hline
$N_1\big/N_2$    & Number of elements in IRS-X $\big/$ IRS-Y  & $\gamma_{l,Z}^{(p)}$    & Gain of the $l$th path from BS-Z to IRS-$p$          \\ \hline
$M$    & Total number of MOs  &     $\gamma_{l,r}^{(p)}$    & Gain of the $l$th path from IRS-$p$ to UE-$r$    \\ \hline
$\boldsymbol{\Theta_1}\big/\boldsymbol{\Theta_2}$   & Phase matrix at IRS-X $\big/$ IRS-Y  &   \!\!\!$\beta_{X,k}^{(1)}\big/\beta_{X,k}^{(2)}$ & \begin{tabular}[c]{@{}l@{}} Path loss in the BS-X to UE-$k$ link via \\ IRS-X $\big/$ IRS-Y\end{tabular}      \\ \hline
$K$      &  Number of UEs served by BS-X              & \!\!\!$\beta_{Y,q}^{(1)}\big/\beta_{Y,q}^{(2)}$ & \begin{tabular}[c]{@{}l@{}} Path loss in the BS-Y to UE-$q$ link via \\ IRS-X $\big/$ IRS-Y \end{tabular}    \\ \hline
$Q$     &  Number of UEs served by BS-Y               & $\mathbf{f}_{ZW}$    & Channel from BS-Z to IRS-W\\ \hline
$L_{k1}\big/L_{k2}$    & \begin{tabular}[c]{@{}l@{}} Number of resolvable paths from BS-X \\to UE-$k$ through IRS-X $\big/$ IRS-Y  \end{tabular}          &   $\!\!\!\mathbf{g}_{Wr} \Big/ \mathbf{t}_{Wr}$   & \begin{tabular}[c]{@{}l@{}} Channel from IRS-W to UE-$r$ \\ served by MO-X $\big/$ MO-Y  \end{tabular} \\ \hline
$L_{q1}\big/L_{q2}$    & \begin{tabular}[c]{@{}l@{}} Number of resolvable paths from BS-Y \\to UE-$q$ through IRS-X $\big/$ IRS-Y  \end{tabular} &  $h_k\big/h_q$ & \begin{tabular}[c]{@{}l@{}} Overall channel from BS-X to UE-$k$\\ $\big/$ BS-Y to UE-$q$ \end{tabular}  \\ \hline
$\phi_1\big/\phi_2$    & \begin{tabular}[c]{@{}l@{}}  Overall phase shift applied \\ at IRS-X $\big/$ IRS-Y by BS-X $\big/$ BS-Y \end{tabular}      &  $\zeta$    & \begin{tabular}[c]{@{}l@{}}  Fraction of time slots to optimize the \\ IRSs to UE-$k$ served by MO-X \end{tabular}    \\ \hline
$\omega^{(1)}_{X,k,l}$ $\big/$ $\omega^{(2)}_{X,k,l}$     & \begin{tabular}[c]{@{}l@{}}Cascaded normalized angle of $l$th path  \\ from BS-X to UE-$k$ via IRS-X $\big/$ IRS-Y\end{tabular}            &  $\omega^{(1)}_{Y,q,l}$ $\big/$ $\omega^{(2)}_{Y,q,l}$    & \begin{tabular}[c]{@{}l@{}} Cascaded normalized angle of $l$th path \\ from BS-Y to UE-$q$ via IRS-X $\big/$ IRS-Y \end{tabular}  \\ \hline
$P$     & Transmit power at the BSs         &   $\text{CO}$  & \begin{tabular}[c]{@{}l@{}} Boolean parameter to indicate whether \\ we allow inter-MO cooperation\end{tabular}  \\ \hline
$\sigma^2$     & Noise variance at the UEs        &  $\langle R_{\text{X}} \rangle$ $\big/$ $\langle R_{\text{Y}} \rangle$    & \begin{tabular}[c]{@{}l@{}}Achievable ergodic sum-SEs  \\ of MOs X $\big/$ Y\end{tabular}   \\ \hline
\end{tabular}
}
\end{table*}

\begin{remark}
The IRS configurations in~\eqref{eq:optimum_vector_k} and~\eqref{eq:optimum_vector_q} do not require knowledge of the channel through the OOB IRS and hence are scalable for any number of MOs. Notably, even in the absence of OOB MOs, the in-band IRS associated with the MO will still procure an SNR that scales quadratically in the number of IRS elements. Consequently, the goal of this paper to demonstrate the utility of choosing the overall phase shifts $\phi_1$ and $\phi_2$ via cooperation rather than cooperatively optimizing the complete IRS phase vectors. 
\end{remark}
\begin{remark}
In order for BS-X and BS-Y to configure IRS-X and IRS-Y to phase values given in~\eqref{eq:optimum_vector_k} and~\eqref{eq:optimum_vector_q}, respectively, it is necessary for both BSs to acquire the knowledge of the respective in-band CSIs at their in-band UEs-$k$ and $q$, respectively. A straightforward approach to achieve this is via inter-MO cooperation during channel estimation (CE), as follows: when one MO performs in-band CE through its IRS, the OOB IRSs are turned off to prevent inter-MO pilot and IRS contamination~\cite{gurgunouglu_TCOM_inter_MO_CSI_extended}. 
This allows all MOs to configure their IRSs according to~\eqref{eq:optimum_vector_k} and~\eqref{eq:optimum_vector_q}. We note that designing and analyzing the feasibility of practical CE protocols in IRS-aided multiple MO systems remains an open problem. However, since our goal is to characterize the impact of multiple MOs deploying IRSs on each other's achievable data rates, we do not account for these overheads in our analysis.
\end{remark}
\subsection{Problem Statement}\label{sec:Prob_statement}
In a $2$-MO system, as shown in Fig.~\ref{System_Model_2_Operator_2_IRS}, each IRS will reflect the signals transmitted by both the MOs. Then, the SE achieved by UE-$k$ scheduled by MO-X at time slot $t$ is 
\vspace{-0.1cm}
\color{black}
\begin{equation*}
\vspace{-0.1cm}
R_k(t) = \log_2\left(1+\frac{P}{\sigma^2} \left| \mathbf{g}_{\text{X}k}^T\boldsymbol{\Theta_1}(t)\mathbf{f}_{\text{XX}}+ \mathbf{g}_{\text{Y}k}^T\boldsymbol{\Theta_2}(t)\mathbf{f}_{\text{XY}}\right|^2\right),
\vspace{-0.1cm}
\end{equation*}
\color{black}
and the SE achieved by UE-$q$ scheduled by MO-Y is 
\vspace{-0.1cm}
\color{black}
\begin{equation*}
\vspace{-0.1cm}
R_q(t) = \log_2\left(1 + \frac{P}{\sigma^2} \left|\mathbf{t}_{\text{X}q}^T\boldsymbol{\Theta_1}(t)\mathbf{f}_{\text{YX}}\!\! + \mathbf{t}_{\text{Y}q}^T\boldsymbol{\Theta_2}(t)\mathbf{f}_{\text{YY}}\right|^2\right),
\vspace{-0.1cm}
\end{equation*} 
\color{black}
where $\boldsymbol{\Theta}_1(t)$ and $\boldsymbol{\Theta}_2(t)$ are set using $\boldsymbol{\theta}_1^{\mathrm{opt}}$ and $\boldsymbol{\theta}_2^{\mathrm{opt}}$ as given in~\eqref{eq:optimum_vector_k} and~\eqref{eq:optimum_vector_q}, respectively for the UEs scheduled in time slot $t$. However, note that the choice of overall phase shifts $\phi_1$ and $\phi_2$ still offers flexibility in terms of being able to combine signals at UEs across both in-band and OOB IRSs. In this context, we consider the following scenarios:
\begin{enumerate}[leftmargin=*]
\item \emph{Joint optimization of IRSs with MO cooperation}: Here, the MOs cooperate to jointly tune the overall phase shifts at the IRSs in every time slot $t$ to maximize the weighted sum-SE of the scheduled UEs.
Mathematically, the problem is
  \vspace{-0.1cm}
\begin{equation}
\!\!\!\!\!\!\!{\phi_1}^{\mathrm{opt}}(t),{\phi_2}^{\mathrm{opt}}(t)= \argmax_{{\phi_1}(t) ,{\phi_2}(t) } w_k R_k(t) +  w_q R_q(t),\tag{P1} 
\vspace{-0.1cm}
\end{equation}
where $w_k,w_q$ are the weights associated with acheivable SEs of UEs $k$, $q$ of MO-X and Y, respectively. 
\item \emph{Optimization of IRSs with \textcolor{black}{time-sharing}}:  Here,  a subset (denoted by $\mathcal{T}_X$) of the time slots are used by MO-$X$ to configure the overall phase shifts of both IRSs to maximize the SE of UE-$k$, and the remaining time slots (denoted by $\mathcal{T}_Y$) are used by MO-$Y$ to optimize the overall phase shifts at the IRSs for UE-$q$. Mathematically, in every time slot $t$, 
  \begin{equation}\label{time-spliiting_optimization problem}
\vspace{-0.1cm}
\hspace{-0.4cm} {\phi_1}^{\mathrm{opt}}(t),{\phi_2}^{\mathrm{opt}}(t)= \argmax_{{\phi_1} (t),{\phi_2}(t) } \sum\limits_{i \in {\{X,Y\}}} \!\!\!R_i(t)\mathbbm{1}_{\{t \in \mathcal{T}_i\}}. \tag{P2}
\end{equation}

\item \emph{Optimization of IRSs without MO cooperation}: Here, the two MOs optimize only their own IRSs to maximize the SE of their UEs (by ignoring the presence of an IRS deployed by another MO.) Mathematically, we realize this by setting: 
\begin{equation}\label{eq_scheme_3_overall_phase}
\phi_1^{\mathrm{opt}}(t) = \phi_2^{\mathrm{opt}}(t) = 0, \ \forall t.
\end{equation}
\end{enumerate}

Then, we answer the following questions:
    \begin{itemize}[leftmargin=*]
        \item How does the ergodic SE of the MOs \textcolor{black}{scale with the system parameters} in the three cases?
         \item Does the presence of an OOB IRS degrade the performance of a given MO?
 \item What is the value of cooperation between the MOs in terms of the achievable ergodic SE?
        \item How do the above answers extend to $M>2$ MO-systems?
    \end{itemize}
   We answer these questions in the following sections. 
   \vspace{-0.15cm}
\section{Performance Analysis in a $2$-MO System} \label{sec:optimization_mmWave_2-BS}
    In this section, we analyze the achievable ergodic sum-SE of the $2$-MO system described above.     
 We first make the following observations about the IRS configurations in~\eqref{eq:optimum_vector_k} and~\eqref{eq:optimum_vector_q}:
     \begin{enumerate}[leftmargin=*]
      \item The IRS vectors are directional in nature and point to the angle of the channel to which it is optimized.
     \item Although the IRS vector $\boldsymbol{\theta}_1\!^{\mathrm{opt}}$ aligns to the in-band path at UE-$k$, it is a random phasor from the UE-$q$'s viewpoint. Similarly, $\boldsymbol{\theta}_2\!^{\mathrm{opt}}$ is optimal to UE-$q$'s in-band path, and is randomly configured from UE-$k$'s viewpoint. 
         \end{enumerate}
     From these observations, IRS-X aligns with the channel to UE-$q$ with probability  $\frac{{L}_{q1}}{N_1}$ and it does not contribute to the channel at UE-$q$ with probability $1-\frac{{L}_{q1}}{N_1}$~\cite[Proof of Theorem $3$]{Yashvanth_TCOM_2024}. Similarly, IRS-Y contributes to the channel at UE-$k$ with probability $\frac{{L}_{k2}}{N_2}$ and does not align with UE-$k$ with probability $1-\frac{{L}_{k2}}{N_2}$. Based on these, four different events arise as summarized in Fig.~\ref{fig:System_Model_with-Events}. \textcolor{black}{Hence, the overall achievable performance in a $2$-MO system is determined by the choice of the overall phase shifts, $\phi_1$ and $\phi_2$, used in each of these four events.\footnote{These events correspond to the OOB effect of the IRSs; by \eqref{eq:optimum_vector_k} and \eqref{eq:optimum_vector_q}, each IRS is always aligned to the in-band UE's channel from its BS.} We next analyze the performances of the different schemes listed in Sec.~\ref{sec:Prob_statement} with varying degrees of cooperation between the MOs under these events.}
   \begin{figure}[t]
 	\vspace{-0.2cm}
 	\centering
 	\includegraphics[width=\linewidth]{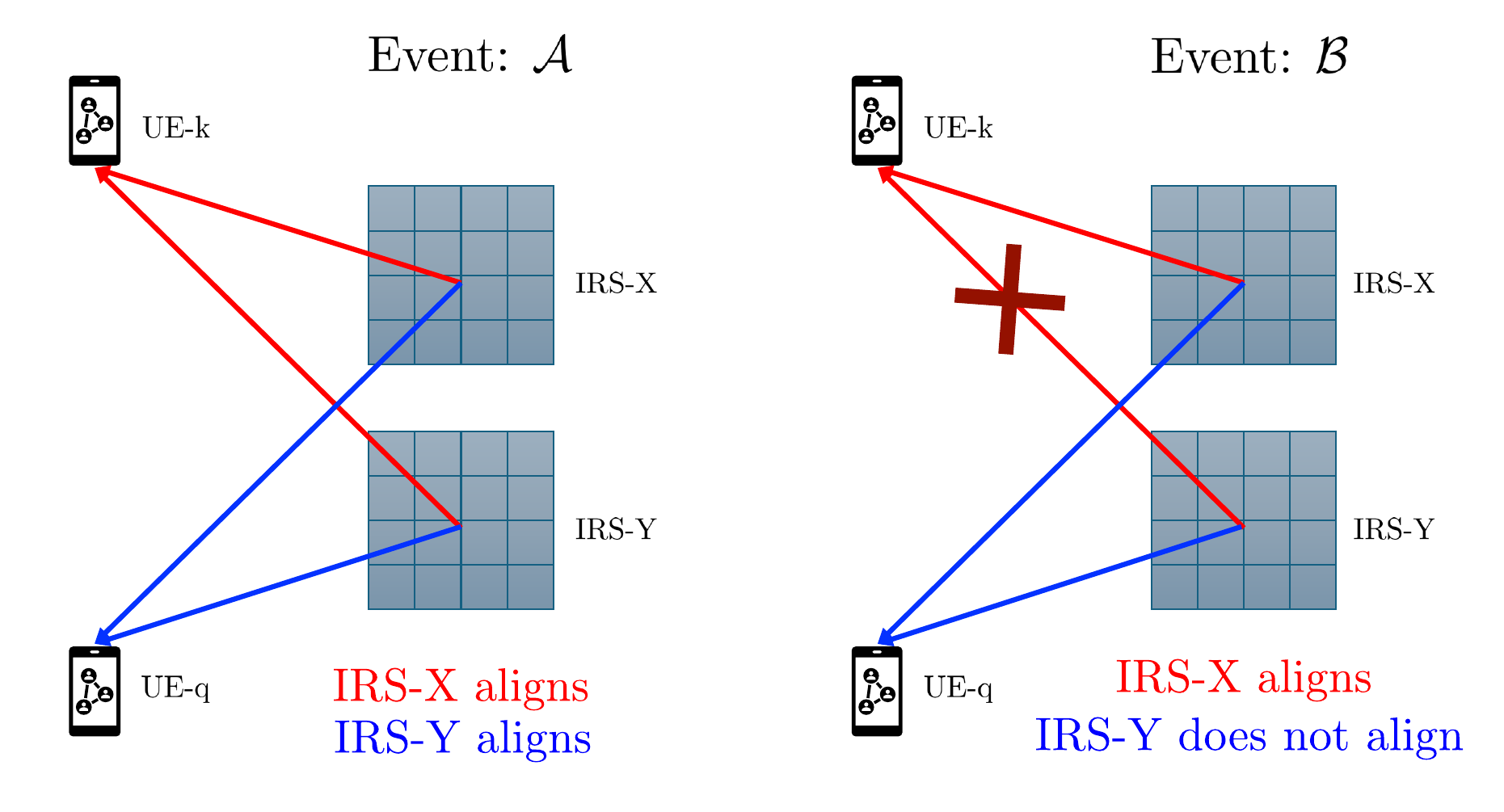}
		\newline
	\vspace{-0.2cm}
	\includegraphics[width=\linewidth]{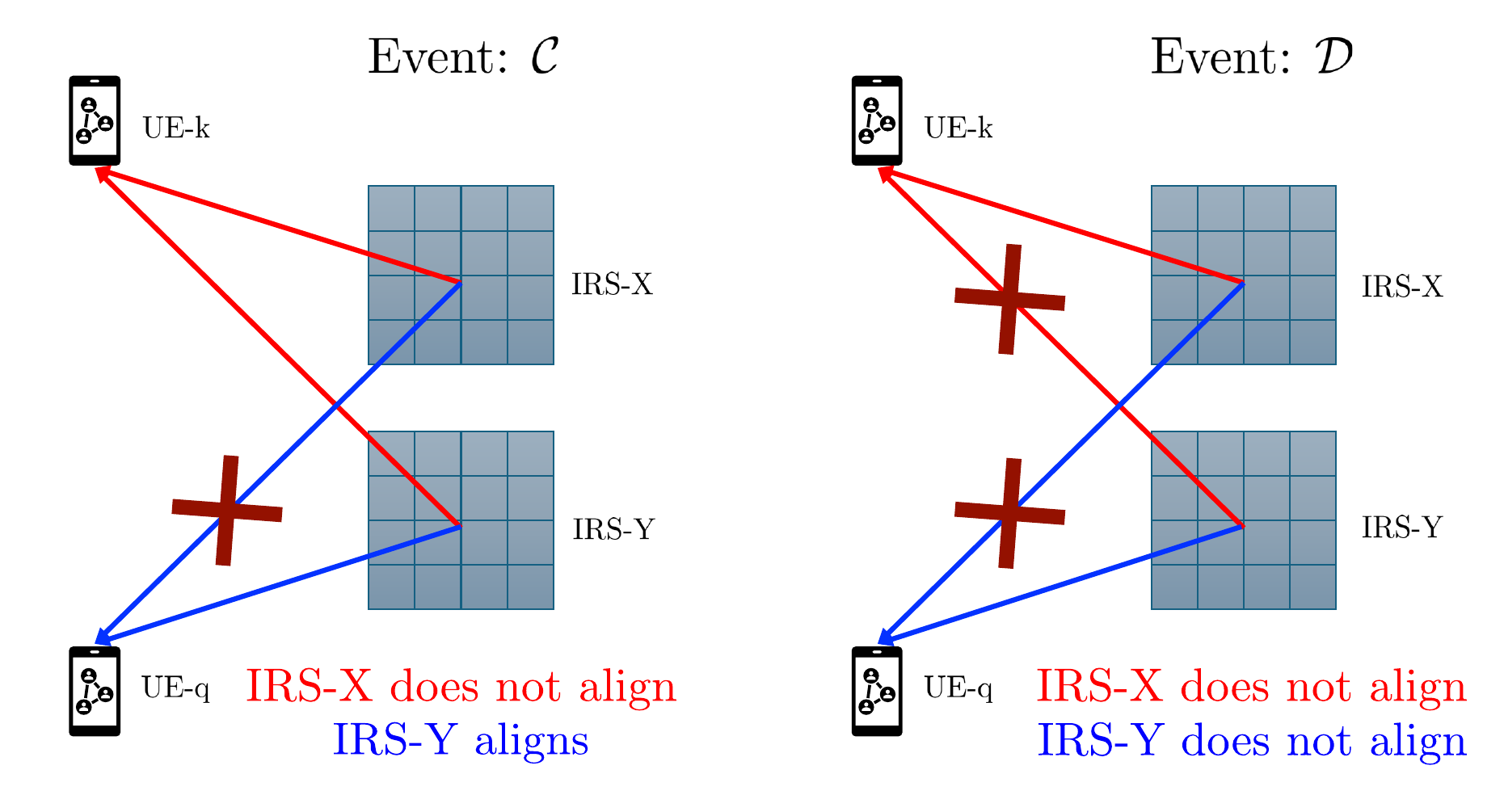}
	 	\caption{Illustration of all possible events in $2$-IRS aided $2$-MO system.}
 	\label{fig:System_Model_with-Events}
 	\vspace{-0.2cm}
 \end{figure}
\vspace{-0.15cm}
\subsection{Event $\mathcal{A}$: IRS-X and IRS-Y align to UE-$q$ and UE-$k$, resp.} \label{subsec: Event A}
    In this event, both IRS-X and IRS-Y align with one of the angles of the $L_{q1}$ and $L_{k2}$ OOB paths at UEs-$q$ and $k$, respectively.
     Now, since the alignment of IRS-X with UE-$q$'s channel is independent of the alignment of IRS-Y with UE-$k$'s channel, the probability of event $\mathcal{A}$ is 
     \vspace{-0.1cm}
    \begin{equation}\label{prob_A}
\Pr(\mathcal A)=    ({ L_{q1}}/{N_1})\times({L_{k2}}/{N_2}).
 \vspace{-0.1cm}
    \end{equation} 
       Now, under event $\mathcal{A}$, there exists indices $l_k^*$ and $l_q^*$ such that   
    \begin{align}\label{array_vector_config_A}
        l_k^* &= \arg_l \left\{N_2\mathbf{\dot{a}}_{N_2}^H(\omega^{(2)}_{X,k,l})\mathbf{\dot{a}}_{N_2}(\omega^{(2)}_{Y,q,1})=1\right\}, \ \text{and}\nonumber \\
        l_q^* &= \arg_l \left\{N_1\mathbf{\dot{a}}_{N_1}^H(\omega^{(1)}_{Y,q,l})\mathbf{\dot{a}}_{N_1}(\omega^{(1)}_{X,k,1})=1\right\},
    \end{align}
   where $\arg_l\{\cdot\}$ returns the index $l$ for which the condition in the braces is satisfied. In other words, the angles of $l_k^*$th and $l_q^*$th OOB paths at UE-$k$ and UE-$q$ match with the angles pointed by the phase configurations at IRS-Y and IRS-X, respectively.
   
   Then, using the expressions for IRS vectors in~\eqref{eq:optimum_vector_k}, and \eqref{eq:optimum_vector_q}, we simplify the channels of UE-$k$, $q$ in~\eqref{kth_channel_eq_1} and~\eqref{qth_channel_eq_1} as
    \begin{align}\label{final_eq_kth_user_Event_A}
        h_k \! = \!\frac{N_1}{\sqrt{L_{k1}}}\!\left|\gamma^{(1)}_{1,X}\gamma^{(1)}_{1,k}\right|\!\times\! e^{j \phi_1} \!+ \!\frac{N_2}{\sqrt{L_{k2}}}\!\left|\gamma^{(2)}_{l_k^*,X}\gamma^{(2)}_{l_k^*,k}\right|\!\times\! e^{j (\phi_2 + \phi_a)},
    \end{align}
   \vspace{-0.4cm}
    \begin{align}\label{final_eq_qth_user_Event_A}
        h_q \! = \!\frac{N_1}{\sqrt{L_{q1}}}\!\left|\gamma^{(1)}_{l_q^*,Y}\gamma^{(1)}_{l_q^*,q}\right|\!\times\! e^{j (\phi_1 + \phi_b)}\! +\! \frac{N_2}{\sqrt{L_{q2}}}\!\left|\gamma^{(2)}_{1,Y}\gamma^{(2)}_{1,q}\right|\!\times\! e^{j \phi_2},
    \end{align}
    respectively, where $\phi_a \triangleq \angle{\gamma^{(2)}_{l_k^*,X}} + \angle{\gamma^{(2)}_{l_k^*,k}} - \angle{\gamma^{(2)}_{1,Y}} - \angle{\gamma^{(2)}_{1,q}}$ and $\phi_b \triangleq \angle{\gamma^{(1)}_{l_q^*,Y}} + \angle{\gamma^{(1)}_{l_q^*,q}} - \angle{\gamma^{(1)}_{1,X}} - \angle{\gamma^{(1)}_{1,k}}$, denote the phase differences between OOB and in-band paths at IRSs X and Y, respectively. Now, if both the IRSs have to constructively add the received signals at both UEs-$k$, $q$, we need
    \vspace{-0.2cm}
    \begin{align}
        \phi_1 &= \phi_2 + \phi_a, \label{eq:inconsistent_eq_case_A_1} \\
       \text{ (and) } \phi_2 &= \phi_1 + \phi_b, \label{eq:inconsistent_eq_case_A_2}
    \end{align} 
    at IRS X and Y, respectively. However, since $\phi_a,\phi_b \in \mathcal{U}[-\pi,\pi) $ are i.i.d. random variables, \eqref{eq:inconsistent_eq_case_A_1} and \eqref{eq:inconsistent_eq_case_A_2} hold simultaneously with zero probability. That is, almost surely, both IRSs cannot be optimal for both UEs at the same time. 
        With this in mind, we analyze the $3$ schemes in Sec.~\ref{sec:Prob_statement}.
    \subsubsection{Joint-optimization of IRSs with MO cooperation} Here, the MOs jointly optimize the overall phase shifts $\phi_1$ and $\phi_2$ at the IRSs to maximize the weighted sum-SE of the UEs scheduled by both MOs. We first rewrite~\eqref{final_eq_kth_user_Event_A} and~\eqref{final_eq_qth_user_Event_A} as
   \vspace{-0.05cm}
    \begin{align}
        h_k & = \alpha  e^{j\phi_1} + \gamma  e^{j(\phi_2 + \phi_a)}, \label{eq_event_A_joint_optim_UE_k} \\  
        h_q & = \beta e^{j(\phi_1+\phi_b)}+\delta e^{j\phi_2},  \label{eq_event_A_joint_optim_UE_q}
    \end{align}
    where, $\alpha\triangleq\frac{N_1}{\sqrt{L_{k1}}}|\gamma^{(1)}_{1,X}\gamma^{(1)}_{1,k}|$, $\gamma\triangleq\frac{N_2}{\sqrt{L_{k2}}}|\gamma^{(2)}_{l_k^*,X}\gamma^{(2)}_{l_k^*,k}|$, $\beta\triangleq\frac{N_1}{\sqrt{L_{q1}}}|\gamma^{(1)}_{l_q^*,Y}\gamma^{(1)}_{l_q^*,q}|$, and $\delta\triangleq\frac{N_2}{\sqrt{L_{q2}}}|\gamma^{(2)}_{1,Y}\gamma^{(2)}_{1,q}|$. \textcolor{black}{Then, $\phi_1$ and $\phi_2$ are determined as $\phi_1^{\mathrm{opt}},\phi_2^{\mathrm{opt}} =$
	      \vspace{-0.2cm}
	    \begin{equation*}
	    \argmax_{\phi_1,\phi_2} \ \ w_k\log_2\left(1+\frac{P}{\sigma^2}|h_k|^2\right) + w_q\log_2\left(1+\frac{P}{\sigma^2}|h_q|^2\right)
		\end{equation*}
		where 
		$\phi_1$, $\phi_2$ are the overall phase shifts set by the BS-X and Y at IRS-X and Y, respectively, $w_k$ and $w_q$ are the weights allotted to the SE achieved by UEs $k$ and $q$, respectively.
				Let $\phi \triangleq \phi_2-\phi_1$, $x\triangleq1+\frac{P}{\sigma^2}(\alpha^2+\gamma^2)$, $v\triangleq2\alpha\gamma$, $y\triangleq1+\frac{P}{\sigma^2}(\beta^2+\delta^2)$ and $z\triangleq 2\beta\delta$. The above problem is equivalent to $\phi^{\mathrm{opt}} \triangleq$
	    	\begin{align*}
		\argmax_{\phi}\! f(\phi) \!=\! \left((x\!+\!v\cos(\phi\!+\!\phi_a))^{w_k}(y\!+\!z\cos(\phi\!-\!\phi_b))^{w_q}\right). 
	    \end{align*}
Since $f(\phi)$ depends only on the difference $ \phi = \phi_2-\phi_1$, the optimization variables can be reduced to a single variable $\phi$. Notably, the solution to this optimization problem inherently accounts for the operating SNR.} By the first order condition, $f'(\phi)=0$, which is
    \begin{equation}\label{eq:FONC_weighted}
		f(\phi)\left\{\frac{w_kv\sin(\phi+\phi_a)}{x+v\cos(\phi+\phi_a)}+\frac{w_qz\sin(\phi-\phi_b)}{y+z\cos(\phi-\phi_b)}\right\}=\!0.
	\end{equation}
\begin{figure*}[t]
    	\vspace{-0.7cm}
    	\begin{align}
    		f''(\phi)= \frac{(f'(\phi))^2}{f(\phi)}- f(\phi)\biggl\{\frac{w_kv\{x\cos(\phi+\phi_a)+v\}}{(x+v\cos(\phi+\phi_a))^2}+\frac{w_qz\{y\cos(\phi-\phi_b)+z\}}{(y+z\cos(\phi-\phi_b))^2}\biggl\}.
    		\label{eq:SOSC_weighted}
    	\end{align}
   	\hrule
    	\vspace{-0.5cm}
    \end{figure*}
    Since the roots of~\eqref{eq:FONC_weighted} do not admit a closed-form solution, we employ a low complexity Newton-Raphson's algorithm~\cite{chong2013introduction} to solve for $\phi$, which is outlined in Algorithm~\ref{alg:One_iteration_method}.  Although the Newton-Raphson method entails multiple iterations, we use only a single iteration to reduce complexity. In Sec.~\ref{sec:numerical_sections}, we numerically show that a single iteration with appropriate initialization yields comparable solutions to high-complexity off-the-shelf optimizers. In particular, we initialize $\phi$ based on the weights allotted to the MOs: we compute the weighted sum rate with $\phi= -\phi_a$ and $\phi = \phi_b$, and choose the value that yields the higher weighted sum-SE. Finally, with $\phi^{\mathrm{opt}}$ in hand, the ergodic SEs of UEs $k$ and $q$ are given by
        \begin{multline}
        \vspace{-0.3cm}
   \langle R_k | \mathcal{A} \rangle \approx \log_2\left(1+\frac{P}{\sigma^2}\left|({N_1}/{\sqrt{L_{k1}}})\left|\gamma^{(1)}_{1,X}\gamma^{(1)}_{1,k}\right|  e^{j\phi_1^{\mathrm{opt}}} \right. \right.  \\
   \left. \left. \hspace{1.2cm}+ ({N_2}/{\sqrt{L_{k2}}})\left|\gamma^{(2)}_{l_k^*,X}\gamma^{(2)}_{l_k^*,k}\right| e^{j(\phi_2^{\mathrm{opt}} + \phi_a)}\right|^2\right), \label{eq_Joint_optim_ergodic_SE_MO_X}
\vspace{-0.3cm}
	   \end{multline}
	   \vspace{-0.3cm}
   \begin{multline}
   \vspace{-0.3cm}
	 \langle R_q | \mathcal{A}\rangle \approx \log_2\left(1+ \frac{P}{\sigma^2}\left|({N_1}/{\sqrt{L_{q1}}})|\gamma^{(1)}_{l_q^*,Y}\gamma^{(1)}_{l_q^*,q}|  \right. \right.  \\
   \left. \left. \times e^{j(\phi_1^{\mathrm{opt}}+\phi_b)} + ({N_2}/{\sqrt{L_{q2}}})|\gamma^{(2)}_{1,Y}\gamma^{(2)}_{1,q}|e^{j\phi_2^{\mathrm{opt}}}\right|^2\right). \label{eq_Joint_optim_ergodic_SE_MO_Y}
   \vspace{-0.4cm}
    \end{multline}    
   \begin{figure}[t]
\vspace{-0.45cm}
    \begin{algorithm}[H] 
		\caption{Newton-Raphson based single iteration method}
		\label{alg:One_iteration_method}
		\begin{algorithmic}[1]
			\If{$w_k R_k(\phi_a,0)+w_q R_q(\phi_a,0) \geq w_k R_k(0,\phi_b)+ w_q R_q(0,\phi_b)$}
			\State $\phi_{init} \leftarrow -\phi_a$,
			\Else
			\State $\phi_{init} \leftarrow \phi_b$.
			\EndIf
			\State Obtain $f'(\phi)\leftarrow	-f(\phi)\left\{\frac{w_kv\sin(\phi+\phi_a)}{x+v\cos(\phi+\phi_a)}+\frac{w_qz\sin(\phi-\phi_b)}{y+z\cos(\phi-\phi_b)}\right\}$.
			\State Compute the second-derivative, $f''(\phi)$ using \eqref{eq:SOSC_weighted} on the top of this page.
			\State Update $\!\phi^{\mathrm{opt}}\leftarrow\phi_{init}-\frac{f'(\phi)}{f''(\phi)}\Bigl|_{\phi=\phi_{init}}$.
		\end{algorithmic}
	\end{algorithm} 
	\vspace{-1cm}
	\end{figure}
\subsubsection{Optimization of IRSs with time sharing} \label{sec:implementation_event_A}
Here, the MOs optimize the overall phase shifts of the IRSs to the UE scheduled by either MO-X or MO-Y, in a time-shared manner. Now, in the time slots used to optimize $\phi_1$ and $\phi_2$ to UE-$k$ scheduled by BS-X, from~\eqref{final_eq_kth_user_Event_A}, we need to choose $\phi_1=\phi_a$ and $\phi_2=0$, respectively. 
     Then,~\eqref{final_eq_kth_user_Event_A} and~\eqref{final_eq_qth_user_Event_A} simplify to
    \begin{align}
       \!\!\! h_k & = \left(\frac{N_1}{\sqrt{L_{k1}}}\left|\gamma^{(1)}_{1,X}\gamma^{(1)}_{1,k}\right| + \frac{N_2}{\sqrt{L_{k2}}}\left|\gamma^{(2)}_{l_k^*,X}\gamma^{(2)}_{l_k^*,k}\right|\right) e^{j\phi_a}, \label{final_eq_kth_user_event_A}\\
        \!\!\! h_q & = \frac{N_1}{\sqrt{L_{q1}}}\left|\gamma^{(1)}_{l_q^*,Y}\gamma^{(1)}_{l_q^*,q}\right| e^{j (\phi_a + \phi_b)} + \frac{N_2}{\sqrt{L_{q2}}}\left|\gamma^{(2)}_{1,Y}\gamma^{(2)}_{1,q}\right|. \label{final_eq_qth_user_event_A}
    \end{align}
  We have the following lemma to characterize the in-band SE.
    \begin{lemma}\label{lemma:lemma1}
        Let $\{X_i\}_{i=1}^{N}$ be i.i.d random variables such that  $X_i \sim \mathcal{CN}(0,1)$.  If $M\triangleq \max(|X_1|,|X_2|,\ldots,|X_N|)$, and $G \triangleq \max(|X_1|^2\!,\!|X_2|^2,\ldots,\!|X_N|^2)$, the expected values of $M,G$ are
         \vspace{-0.2cm}
                \begin{equation}
                \vspace{-0.2cm}
        \mathbb{E}[M] = f(N) \triangleq N\sum\nolimits_{n=0}^{N-1} {{N-1}\choose n} {(-1)}^n \frac{1}{(n+1)^{\frac{3}{2}}}\sqrt{\frac{\pi}{4}},
        \end{equation} 
        and 
        \vspace{-0.05cm}
        \begin{equation}\label{eq:EG}
        \mathbb{E}[G] = g(N) \triangleq N\sum\nolimits_{n=0}^{N-1} {{N-1}\choose n} {(-1)}^n \frac{1}{(n+1)^{2}},
        	\vspace{-0.05cm}
        \end{equation} respectively. 
    \end{lemma}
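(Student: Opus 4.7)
The plan is to treat $\mathbb{E}[G]$ and $\mathbb{E}[M]$ as expectations of maxima of $N$ i.i.d.\ samples and compute them directly from the CDFs. Since $X_i\sim\mathcal{CN}(0,1)$, the real and imaginary parts are independent $\mathcal{N}(0,1/2)$, so $|X_i|^2$ is exponential with rate $1$ and CDF $F_{|X_i|^2}(x)=1-e^{-x}$, while $|X_i|$ is Rayleigh with CDF $F_{|X_i|}(r)=1-e^{-r^2}$. By the i.i.d.\ assumption, the CDFs of the maxima factor as $F_G(x)=(1-e^{-x})^N$ and $F_M(r)=(1-e^{-r^2})^N$.

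Next, I would differentiate to obtain the densities $f_G(x)=N(1-e^{-x})^{N-1}e^{-x}$ and $f_M(r)=2Nr(1-e^{-r^2})^{N-1}e^{-r^2}$, and expand the factors $(1-e^{-x})^{N-1}$ and $(1-e^{-r^2})^{N-1}$ using the binomial theorem. This rewrites each density as a finite sum of exponentials weighted by $\binom{N-1}{n}(-1)^n$, which is precisely the structure appearing in the claimed formulas.

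The last step is to swap the finite sum with the expectation integral and evaluate termwise using the standard identities $\int_0^\infty x\,e^{-(n+1)x}\,\mathrm{d}x = 1/(n+1)^2$ (obtained from the Gamma function) for $\mathbb{E}[G]$, and $\int_0^\infty r^2\,e^{-(n+1)r^2}\,\mathrm{d}r = \sqrt{\pi}/(4(n+1)^{3/2})$ (a standard Gaussian moment) for $\mathbb{E}[M]$. Collecting the constants immediately gives the $1/(n+1)^2$ expression for $\mathbb{E}[G]$. For $\mathbb{E}[M]$, the factor of $2N$ in $f_M$ combined with $\sqrt{\pi}/(4(n+1)^{3/2})$ yields $N\sqrt{\pi}/(2(n+1)^{3/2})$, which matches the claim after identifying $\sqrt{\pi}/2=\sqrt{\pi/4}$.

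There is no genuine obstacle here, since the argument reduces to order-statistics CDFs, a binomial expansion, and two elementary integrals. The only care required is (i) using the correct $\mathcal{CN}(0,1)$ convention so that $|X_i|^2$ has rate exactly $1$ (and not $1/2$), and (ii) tracking the constant prefactors of $f_G$ and $f_M$ so that they combine cleanly with the binomial sum to produce the exact forms stated in \eqref{eq:EG} and the companion expression for $\mathbb{E}[M]$.
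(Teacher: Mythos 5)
Your proof is correct: the order-statistics CDF, binomial expansion, and termwise evaluation via $\int_0^\infty x e^{-(n+1)x}\,\mathrm{d}x = (n+1)^{-2}$ and $\int_0^\infty r^2 e^{-(n+1)r^2}\,\mathrm{d}r = \sqrt{\pi}/(4(n+1)^{3/2})$ reproduce exactly the stated $f(N)$ and $g(N)$, with the $\mathcal{CN}(0,1)$ convention correctly giving $|X_i|^2\sim\mathrm{Exp}(1)$. The paper omits its proof as ``straightforward,'' and yours is precisely the standard argument it has in mind, so there is nothing further to add.
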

    	\vspace{-0.3cm}
    \begin{proof}
        Straightforward and omitted due to lack of space. 
    \end{proof}
    \vspace{-0.3cm}
    Using Lemma~\ref{lemma:lemma1}, we can show that $\mathbb E \left[|\gamma^{(1)}_{1,X}\gamma^{(1)}_{1,k}|\right] = (f(L_{k1}))^2 \sqrt{\beta^{(1)}_{X,k}} $, and $\mathbb E \left[|\gamma^{(1)}_{1,X}\gamma^{(1)}_{1,k}|^2\right] = (g(L_{k1}))^2 {\beta^{(1)}_{X,k}}$, where $\beta^{(1)}_{X,k} \triangleq \beta^{(1)}_{X}\beta^{(1)}_{k}$ and $\beta^{(2)}_{X,k} \triangleq \beta^{(2)}_{X}\beta^{(2)}_{k}$. Similarly, let $\beta^{(1)}_{Y,q} \triangleq \beta^{(1)}_{Y}\beta^{(1)}_{q}$ and $\beta^{(2)}_{Y,q}\triangleq \beta^{(2)}_{Y}\beta^{(2)}_{q}$. 
        Conditioned on event $\mathcal{A}$, by Jensen's approximation, the ergodic SE of UE-$k$, $q$ is
    \begin{align}\label{jansen_rate}
        \langle R_i|\mathcal{A}\rangle 
        \!&\approx\log_2\left(1+\mathbb{E}[|h_i|^2]P/\sigma^2\right), \ \ i\in \{k,q\} , 
    \end{align}
    where using~\eqref{final_eq_kth_user_event_A}, \eqref{final_eq_qth_user_event_A}, $\mathbb{E}\left[|h_k|^2\right]$, $\mathbb{E}[|h_q|^2]$ are derived below:
    \begin{multline}\label{avg_channel_kth}
        \mathbb{E}[|h_k|^2] 
        = ({N_1^2}/{L_{k1}})(g(L_{k1}))^2\beta^{(1)}_{X,k}+ ({N_2^2}/{L_{k2}})\beta^{(2)}_{X,k} \\+ ({\pi N_1N_2} /{2\sqrt{L_{k1}L_{k2}}}) (f(L_{k1}))^2 \sqrt{\beta^{(1)}_{X,k}\beta^{(2)}_{X,k}},
    \end{multline}
    \vspace{-0.3cm}
    \begin{multline}
        \mathbb{E}\left[|h_q|^2\right] 
        = ({N_1^2}/{L_{q1}})\beta^{(1)}_{Y,q} + ({N_2^2}/{L_{q2}})\beta^{(2)}_{Y,q} (g(L_{q2}))^2 \\
        \hspace{-0.1cm}+\frac{\pi N_1N_2} {2\sqrt{L_{q1}L_{q2}}}(f(L_{q2}))^2 \sqrt{\beta^{(1)}_{Y,q}\beta^{(2)}_{Y,q}}\times\mathbb{E}[\cos(\phi_a+\phi_b)].\!\!\!\!\!\!
    \end{multline}
    Since $\phi_a,\phi_b \sim \mathcal{U}[-\pi,\pi)$, we have $\mathbb{E}[\cos(\phi_a+\phi_b)]=0$. So, 
    \begin{equation}\label{avg_channel_qth}
    	\vspace{-0.05cm}
   \mathbb{E}\left[|h_q|^2\right] = ({N_1^2}/{L_{q1}})\beta^{(1)}_{Y,q} + ({N_2^2}/{L_{q2}})(g(L_{q2}))^2\beta^{(2)}_{Y,q}.
    \end{equation}
    Using~\eqref{avg_channel_kth} and~\eqref{avg_channel_qth} in~\eqref{jansen_rate}, the ergodic SEs of UE-$k$, $q$ when $\phi_1$, $\phi_2$ are optimized only for UE-$k$ can be obtained as
    \begin{align}
        \langle R_k|\mathcal{A} \rangle &\approx \log_2\biggl(1+\frac{P}{\sigma^2}\biggl\{\frac{N_1^2}{L_{k1}}(g(L_{k1}))^2\beta^{(1)}_{X,k}+\frac{N_2^2}{L_{k2}}\beta^{(2)}_{X,k}  \nonumber \\
        &\hspace{-0.6cm}+ ({\pi N_1N_2}/{2\sqrt{L_{k1}L_{k2}}}) \!(f(L_{k1}))^2\sqrt{\beta^{(1)}_{X,k}\beta^{(2)}_{X,k}}\biggl\}\biggl), \label{Rate_case_1_EventA} \\
        \langle R_q|\mathcal{A} \rangle &\approx \log_2\left(\!1\!+\frac{P}{\sigma^2}\left\{\!\frac{N_1^2}{L_{q1}}\beta^{(1)}_{Y,q} + \frac{N_2^2}{L_{q2}}\!(g(L_{q2}))^2\beta^{(2)}_{Y,q}\right\}\!\right)\!. \label{eq:rate_UE_q_ignoring_OOB_IRS}
    \end{align}
   Similarly, in the time slots used to optimize $\phi_1$, $\phi_2$ for UE-$q$, from~\eqref{final_eq_qth_user_Event_A}, $\phi_1=0$ and $\phi_2=\phi_b$. Then, 
       similar to~\eqref{Rate_case_1_EventA} and~\eqref{eq:rate_UE_q_ignoring_OOB_IRS}, the ergodic SE of UE-$k$, $q$ is    
        \begin{align}
       \!\!\! \langle R_k|\mathcal{A} \rangle &\approx \log_2\biggl(1+\frac{P}{\sigma^2}\biggl\{\frac{N_1^2}{L_{k1}}(g(L_{k1}))^2\beta^{(1)}_{X,k}+\frac{N_2^2}{L_{k2}}\beta^{(2)}_{X,k}\biggl\}\biggl),  \label{eq:rate_UE_k_ignoring_OOB_IRS} \\
       \!\!\! \langle R_q|\mathcal{A} \rangle &\approx \log_2\biggl(1+\frac{P}{\sigma^2}\biggl\{\frac{N_1^2}{L_{q1}}\beta^{(1)}_{Y,q} + \frac{N_2^2}{L_{q2}}(g(L_{q2}))^2\beta^{(2)}_{Y,q} \nonumber \\
        &\hspace{-0.3cm}+ ({\pi N_1N_2}/ {2\sqrt{L_{q1}L_{q2}}})(f(L_{q2}))^2 \sqrt{\beta^{(1)}_{Y,q}\beta^{(2)}_{Y,q}}\biggl\}\biggl). \label{Rate_case_2EventA_UE_q}
    \end{align}
    Therefore, using the expressions in~\eqref{Rate_case_1_EventA} to \eqref{Rate_case_2EventA_UE_q}, under a time-sharing scheme, with a fraction of time-slots, say  $\zeta T_c$, $\zeta \in (0,1)$ ($T_c$ is the coherence time) used to optimize overall phase shifts of both IRSs to UE-$k$, and remaining $(1-\zeta) T_c$ slots used to tune the overall phase shifts at both IRSs to serve UE-$q$, the overall ergodic SEs of the MOs are as given in~\eqref{eq:time_share_UE_k},~\eqref{eq:time_share_UE_q}, respectively, at the bottom of this page.
 	\begin{figure*}[b]
 		\vspace{-0.4cm}
 		\hrule
 		\begin{align}
 			\langle R_k|\mathcal{A} \rangle\textcolor{black}{_{\text{time-sharing}}} &\approx (1-\zeta)\log_2\left(1+\frac{P}{\sigma^2}\left\{({N_1^2}/{L_{k1}})(g(L_{k1}))^2\!\beta^{(1)}_{X,k}+ ({N_2^2}/{L_{k2}}) \beta^{(2)}_{X,k}\right\}\right) \nonumber \\ &\hspace{1.5cm}+\zeta\log_2\biggl(1\!+\!\frac{P}{\sigma^2}\biggl\{\frac{N_1^2}{L_{k1}}(g(L_{k1}))^2\beta^{(1)}_{X,k} \!+\! \frac{N_2^2}{L_{k2}}\beta^{(2)}_{X,k}+\frac{\pi N_1N_2} {2\sqrt{L_{k1}L_{k2}}}(f(L_{k1}))^2 \sqrt{\beta^{(1)}_{X,k}\beta^{(2)}_{X,k}}\biggl\}\biggl), \label{eq:time_share_UE_k} \\
 			\langle R_q|\mathcal{A} \rangle\textcolor{black}{_{\text{time-sharing}}} &\approx \zeta\log_2\left(1+\frac{P}{\sigma^2}\left\{({N_1^2}/{L_{q1}})\beta^{(1)}_{Y,q} + ({N_2^2}/{L_{q2}})(g(L_{q2}))^2\beta^{(2)}_{Y,q}\right\}\right)  \nonumber\\
 			\vspace{-0.2cm} 
 			&\hspace{0.8cm}+(1-\zeta)\log_2\biggl(1+\frac{P}{\sigma^2}\biggl\{\frac{N_1^2}{L_{q1}}\beta^{(1)}_{Y,q} + \frac{N_2^2}{L_{q2}}(g(L_{q2}))^2\beta^{(2)}_{Y,q}+\frac{\pi N_1N_2} {2\sqrt{L_{q1}L_{q2}}}(f(L_{q2}))^2 \sqrt{\beta^{(1)}_{Y,q}\beta^{(2)}_{Y,q}}\biggl\}\biggl). \label{eq:time_share_UE_q}
 		\end{align}
 		\vspace{-0.7cm}
 	\end{figure*}
 	
	\subsubsection{Optimization of IRSs without cooperation} In this case, the MOs optimize their IRSs by only considering the in-band channels at their UEs. So, the overall phase shifts can be set to $\phi_1=\phi_2=0$ as per~\eqref{eq_scheme_3_overall_phase}. Hence, the ergodic SE at UE-$k$, $q$ is given by~\eqref{eq:rate_UE_k_ignoring_OOB_IRS} and~\eqref{eq:rate_UE_q_ignoring_OOB_IRS}, respectively.
       Clearly, the cross terms (as in~\eqref{Rate_case_1_EventA} and~\eqref{Rate_case_2EventA_UE_q}) do not appear in these expressions as the signals from the IRSs do not add coherently at the UEs. 
       \subsection{Evt. $\mathcal{B}$: IRS-X aligns to UE-$q$, IRS-Y does not align to UE-$k$}
    Here, one of the $L_{q1}$ OOB paths at UE-$q$ aligns with IRS-X's beam, while none of the $L_{k2}$ OOB paths at UE-$k$ match with IRS-Y's beam. Hence, the probability of event $\mathcal{B}$ is  
    \begin{equation}\label{prob_B}
        \Pr(\mathcal{B})  =  ({ L_{q1}}/{N_1})\cdot\left(1-({L_{k2}}/{N_2})\right).
    \vspace{-0.1cm}
    \end{equation} 
Note that there is no need for joint optimization of $\phi_1,\phi_2$ here because only one of the MO's UE gets signals reflected from both IRSs. Then, the channels at UE-$k, q$ are 
    \begin{align}\label{final_eq_kth_user}
        &\hspace{-0.15 cm}h_k  = ({N_1}/{\sqrt{L_{k1}}})\left|\gamma^{(1)}_{1,X}\gamma^{(1)}_{1,k}\right|\times e^{j \phi_1}, \\
        &\hspace{-0.15 cm}h_q \! = \!\frac{N_1}{\sqrt{L_{q1}}}\!\left|\gamma^{(1)}_{l_q^*,Y}\gamma^{(1)}_{l_q^*,q}\right|\! e^{j (\phi_1 + \phi_b)}\! +\! \frac{N_2}{\sqrt{L_{q2}}}\!\left|\gamma^{(2)}_{1,Y}\gamma^{(2)}_{1,q}\right|\! e^{j \phi_2}.\!\! \label{event2_qth_equation}
    \end{align}
	Since IRS-Y does not contribute to UE-$k$, the SE achieved by UE-$k$ is only due to IRS-X and is independent of $\phi_1$. On the other hand, for coherent addition of in-band and OOB paths to maximize $|h_q|^2$ in \eqref{event2_qth_equation}, we need $\phi_2 = \phi_1 + \phi_b$, and we choose the simplest solution for this, namely $\{\phi_1=0;\phi_2=\phi_b\}$ so that BS-X need not apply any additional overall phase. Then,
    \begin{align}
        \hspace{-0.15 cm}h_k & = ({N_1}/{\sqrt{L_{k1}}})\left|\gamma^{(1)}_{1,X}\gamma^{(1)}_{1,k}\right|, \ \text{and}\nonumber \\
        \hspace{-0.15 cm}h_q & = \left(\frac{N_1}{\sqrt{L_{q1}}}\left|\gamma^{(1)}_{l_q^*,Y}\gamma^{(1)}_{l_q^*,q}\right| + \frac{N_2}{\sqrt{L_{q2}}}\left|\gamma^{(2)}_{1,Y}\gamma^{(2)}_{1,q}\right|\right) \!\times e^{j\phi_b}.
    \end{align}
So, the ergodic SE of UE-$k$, $q$ when $\phi_1$, $\phi_2$ are optimized to UE-$q$ in \textcolor{black}{$(1-\zeta)T_c$ time slots} can be obtained as
 \begin{equation}
        \langle R_k|\mathcal B \rangle \approx \log_2\biggl(1+\frac{P}{\sigma^2}\biggl\{\frac{N_1^2}{L_{k1}}(g(L_{k1}))^2\beta^{(1)}_{X,k}\biggl\}\biggl),  \label{Rate_EventB_k} 
    \end{equation} and~\eqref{eq:time_share_UE_q}, respectively.
Similarly, with no MO cooperation, the SE at UE-$k,q$ are as in~\eqref{Rate_EventB_k} and \eqref{eq:rate_UE_q_ignoring_OOB_IRS}, respectively. 

To summarize, \textcolor{black}{when event $\mathcal{B}$ occurs}, for 1) joint optimization with MO cooperation, the SEs of UEs $k$ and $q$ are given by~\eqref{Rate_EventB_k} and~\eqref{Rate_case_2EventA_UE_q}, respectively; for 2) optimization with time-sharing schemes, the SEs of UEs $k$ and $q$ are given by \eqref{Rate_EventB_k} and~\eqref{eq:time_share_UE_q}, respectively.
With 3) no cooperation, the SEs are given by~\eqref{Rate_EventB_k} and \eqref{eq:rate_UE_q_ignoring_OOB_IRS}, respectively.
\subsection{Evt. $\mathcal{C}$: IRS-X does not align to UE-$q$, IRS-Y aligns to UE-$k$}\label{sec:Case_C}
    This event is the complement of event $\mathcal{B}$ described above, i.e., one of the $L_{k2}$ OOB paths at UE-$k$ aligns with IRS-Y, but none of the $L_{q1}$ OOB paths at UE-$q$ matches with the beam formed by IRS-X. Hence, 
    \begin{equation}\label{prob_C}
        \Pr(\mathcal C) = \left(1-({L_{q1}}/{N_1})\right)\cdot({L_{k2}}/{N_2}).
    \end{equation}
   Following the analysis similar to event $\mathcal{B}$, the final expressions for the ergodic SEs of UE-$k$, $q$ when $\phi_1$, $\phi_2$ are optimized to UE-$k$ for $\zeta T_c$ time slots are given in~\eqref{eq:time_share_UE_k}, and as
        \begin{equation}\label{Rate_EventC_UE_q}
        \langle R_q|\mathcal C \rangle \approx \log_2\left(1+\frac{P}{\sigma^2}\left\{\frac{N_2^2}{L_{q2}}(g(L_{q2}))^2\beta^{(2)}_{Y,q} \right\}\right), 
    \end{equation} respectively.
   Similarly, without MO cooperation, the SEs at UEs $k$ and $q$ are given by \eqref{eq:rate_UE_k_ignoring_OOB_IRS} and~\eqref{Rate_EventC_UE_q}, respectively.
   
   Thus, whenever event $\mathcal{C}$ occurs, for 1) joint optimization with MO cooperation, the SEs of UEs $k$, $q$ are given by~\eqref{Rate_case_1_EventA} and~\eqref{Rate_EventC_UE_q}; for 2) optimization with time-sharing, the SEs of UEs $k$, $q$ are given by \eqref{eq:time_share_UE_k} and~\eqref{Rate_EventC_UE_q}, respectively.
With 3) no cooperation, the SEs are as in~\eqref{eq:rate_UE_k_ignoring_OOB_IRS}, and~\eqref{Rate_EventC_UE_q}, respectively.
    \vspace{-0.2cm}
\subsection{Event $\mathcal D$: IRS-X, Y do not align to UE-$q$, $k$, respectively}
    In this final event, none of the $L_{q1}$ and $L_{k2}$ OOB paths match with IRS-X and IRS-Y, respectively. The probability of this event is given by
    \begin{equation}\label{prob_D}
    	\vspace{-0.1cm}
        \Pr(\mathcal D) = \left(1-({L_{q1}}/{N_1})\right)\cdot\left(1-({L_{k2}}/{N_2})\right).
    \end{equation} Since none of the IRSs align with an OOB UE, this event completely obviates the need for tuning $\phi_1,\phi_2$. The channel coefficients under this event are 
    \begin{align}
      \!\!  h_k \! = \!\frac{N_1}{\sqrt{L_{k1}}}\!\left|\gamma^{(1)}_{1,X}\gamma^{(1)}_{1,k}\right|\! e^{j \phi_1}, \  
        h_q \! = \! \frac{N_2}{\sqrt{L_{q2}}}\!\left|\gamma^{(2)}_{1,Y}\gamma^{(2)}_{1,q}\right|\! e^{j \phi_2},
    \end{align}
    respectively, and the ergodic SEs of UE-$k$, $q$ are given by
    \begin{align}\label{Rate_EventD}
        \langle R_k|\mathcal D \rangle = \langle R_k|\mathcal B \rangle, \text{  and }  \langle R_q|\mathcal D \rangle = \langle R_q|\mathcal C \rangle,
    \end{align} where $\langle R_k|\mathcal B \rangle$ and $\langle R_q|\mathcal C \rangle$ are as per~\eqref{Rate_EventB_k},~\eqref{Rate_EventC_UE_q}, respectively.
    
     To summarize, whenever event $\mathcal{D}$ occurs, for 1) joint optimization with MO cooperation, 2) optimization with time-sharing, and 3) with no MO schemes cooperation, the SEs of UEs $k$ and $q$ are the same for all $3$ schemes as given in~\eqref{Rate_EventD}.
We now state our main result on the overall ergodic sum-SEs of both the MOs under round-robin (RR) scheduling.
    \begin{theorem}\label{thm:theorem-1}
    Under the SV channel model in the mmWave bands, when MOs X \& Y control an IRS each to serve their subscribed UEs, the ergodic sum-SE of MOs X and Y under RR scheduling are characterized in the following: 
    \begin{enumerate}[leftmargin=*]
			\item Optimization of IRSs with time sharing where $\zeta T_c$ time slots are alloted to MO-X and $(1-\zeta)T_c$ time slots are alloted to MO-Y: $\langle R_X^{\zeta} \rangle_{\textrm{TS}}$, $\langle R_Y^{\zeta} \rangle_{\textrm{TS}}$ as in~\eqref{final_rate_operator_X} and~\eqref{final_rate_operator_Y}, with $\mathsf{CO}=1$.
    	\item Joint-optimization of IRSs with MO cooperation: The ergodic rates for MO-X and MO-Y are given as in~\eqref{eq_Joint_optim_full_ergodic_SE_MO_X} and~\eqref{eq_Joint_optim_full_ergodic_SE_MO_Y}, respectively, where $\phi_1^{\mathrm{opt}}$ and $\phi_2^{\mathrm{opt}}$ are chosen such that $\phi^{\mathrm{opt}} = \phi_2^{\mathrm{opt}} - \phi_1^{\mathrm{opt}}$ is a solution obtained from Algorithm~\ref{alg:One_iteration_method}.		Further, an upper bound on the SEs $\langle R_X \rangle_{\textrm{JO}}$ and $\langle R_Y \rangle_{\textrm{JO}}$ without relying on  Algorithm~\ref{alg:One_iteration_method} can be obtained as $\langle R_X \rangle_{\textrm{JO}} \leq \langle R_X^{\zeta} \rangle_{\textrm{TS}}\Big\vert_{\zeta=1}$ and $\langle R_Y \rangle_{\textrm{JO}} \leq \langle R_Y^{\zeta} \rangle_{\textrm{TS}}\Big\vert_{\zeta=0}$, respectively, 
	where $\langle R_X^{\zeta} \rangle_{\textrm{TS}}$, $\langle R_Y^{\zeta} \rangle_{\textrm{TS}}$ are given as in~\eqref{final_rate_operator_X} and~\eqref{final_rate_operator_Y}, with $\mathsf{CO}=1$.	    	
	\item Optimization of IRSs without MO cooperation: $\langle R_X \rangle_{\textrm{NCO}}$, $\langle R_Y \rangle_{\textrm{NCO}}$ as given in~\eqref{final_rate_operator_X} and \eqref{final_rate_operator_Y} with $\mathsf{CO}=0$.
    \end{enumerate}
    \begin{figure*}[t]
    	\vspace{-0.8cm}
    \begin{align}
    	\hspace{-0.5cm}\langle R_X^{\zeta} \rangle_{\textrm{TS}} \!&\approx \! \frac{1}{K}\sum_{k=1}^K\! \left\{\left(\!1-\!\frac{ L_{k2}}{N_2}\right)\!\log_2\!\left(\!1\!+\!\frac{P}{\sigma^2}\frac{N_1^2}{L_{k1}}(g(L_{k1}))^2\beta^{(1)}_{X,k}\right) \!+\! (1-\zeta)\frac{L_{k2}}{N_2}\log_2\!\left(\!1\!+\!\frac{P}{\sigma^2}\left\{\frac{N_1^2}{L_{k1}}(g(L_{k1}))^2\!\beta^{(1)}_{X,k}\!+\! \frac{N_2^2}{L_{k2}}\beta^{(2)}_{X,k}\right\}\right)\right. \nonumber\\ 
	&\hspace{0.8cm}\left.+\zeta\frac{ L_{k2}}{N_2}\log_2\left(1+\frac{P}{\sigma^2}\left\{\frac{N_1^2}{L_{k1}}(g(L_{k1}))^2\beta^{(1)}_{X,k}+\frac{N_2^2}{L_{k2}}\beta^{(2)}_{X,k} 
    	 + \textcolor{black}{\mathbbm{1}_{\{\mathsf{CO} = 1\}}} \frac{\pi N_1N_2(f(L_{k1}))^2\sqrt{\beta^{(1)}_{X,k}\beta^{(2)}_{X,k}}} {2\sqrt{L_{k1}L_{k2}}} \right\}\right)\right\},  \label{final_rate_operator_X} \\
    	\!\!\langle R_Y^{\zeta} \rangle_{\textrm{TS}} &\approx \frac{1}{Q} \sum_{q=1}^Q \left\{\left(1-\frac{ L_{q1}}{N_1}\right)\log_2\left(1+\frac{P}{\sigma^2}\frac{N_2^2}{L_{q2}}(g(L_{q2}))^2\beta^{(2)}_{Y,q}\right) + \zeta\frac{L_{q1}}{N_1}\log_2              
    	\left(1+\frac{P}{\sigma^2}\left\{\frac{N_1^2}{L_{q1}}\beta^{(1)}_{Y,q}+\frac{N_2^2}{L_{q2}}(g(L_{q2}))^2\beta^{(2)}_{Y,q} \right\}\right)\right.  \nonumber \\
    	&\hspace{0.2cm}\left.+(1-\zeta)\frac{ L_{q1}}{N_1}\log_2              
    	\left(1+\frac{P}{\sigma^2}\left\{\frac{N_1^2}{L_{q1}}\beta^{(1)}_{Y,q}+\frac{N_2^2}{L_{q2}}(g(L_{q2}))^2\beta^{(2)}_{Y,q} + \textcolor{black}{\mathbbm{1}_{\{\mathsf{CO} = 1\}}} \frac{\pi N_1N_2(f(L_{q2}))^2\sqrt{\beta^{(1)}_{Y,q}\beta^{(2)}_{Y,q}}} {2\sqrt{L_{q1}\!L_{q2}}} \right\}\right)\right\}. \label{final_rate_operator_Y} 
          \end{align}
          \hrule
          \vspace{-0.4cm}
          \end{figure*}
    \iffalse
    \begin{align} \label{final_rate_operator}
        &\!\!\langle R_Z \rangle \leq  \frac{1}{P}\sum_{p=1}^P \Bigl(1-\frac{L_{pn}}{N_n}\Bigl)\log_2\biggl(1+\frac{P}{\sigma^2}\frac{N_1^2}{L_{p1}}\beta^{(1)}_{Z,p}\biggl)  \nonumber \\
        &\!\!\!+\!\frac{L_{pn}}{N_n}\!\log_2\!\biggl(\!\!1\!+\!\frac{P}{\sigma^2}\!\biggl\{\!\frac{N_1^2}{L_{p1}}\!\beta^{(1)}_{Z,p}\!+\!\frac{N_2^2}{L_{p2}}\!\beta^{(2)}_{Z,p}
        \!+\! \frac{\pi^2\!N_1\!N_2\!\sqrt{\beta^{(1)}_{Z,p}\!\beta^{(2)}_{Z,p}}} {8\!\sqrt{L_{p1}\!L_{p2}}} \!\biggl\}\!\!\biggl),
        \end{align}
    \fi
    \begin{figure*}
    \begin{align}
   & \langle R_X \rangle_{\textrm{JO}} \approx \dfrac{1}{K}\sum_{k=1}^K\left\{\frac{L_{q1}L_{k2}}{N_1N_2}\log_2\left(1+\frac{P}{\sigma^2}\left|\frac{N_1}{\sqrt{L_{k1}}}\left|\gamma^{(1)}_{1,X}\gamma^{(1)}_{1,k}\right|  e^{j\phi_1^{\mathrm{opt}}} + \frac{N_2}{\sqrt{L_{k2}}}\left|\gamma^{(2)}_{l_k^*,X}\gamma^{(2)}_{l_k^*,k}\right| e^{j(\phi_2^{\mathrm{opt}} + \phi_a)}\right|^2\right) + \left(1-\frac{L_{k2}}{N_2}\right)\right. \nonumber\\
   & \hspace{1.8cm}\times \log_2\biggl(1+\frac{P}{\sigma^2}\biggl\{\frac{N_1^2}{L_{k1}}(g(L_{k1}))^2\beta^{(1)}_{X,k}\biggl\}\biggl) + \left(1-\frac{L_{q1}}{N_1}\right)\frac{L_{k2}}{N_2}\log_2\left(1+\frac{P}{\sigma^2}\left\{\frac{N_1^2}{L_{k1}}(g(L_{k1}))^2\beta^{(1)}_{X,k} \right. \right. \nonumber \\
    	 & \hspace{6cm}\left. \left. \left. +({N_2^2}/{L_{k2}})\beta^{(2)}_{X,k} + \left({\pi N_1N_2(f(L_{k1}))^2\sqrt{\beta^{(1)}_{X,k}\beta^{(2)}_{X,k}}}\Big/ {2\sqrt{L_{k1}L_{k2}}}\right) \right\}\right)\right\}, \label{eq_Joint_optim_full_ergodic_SE_MO_X}\\
	& \langle R_Y \rangle_{\textrm{JO}} \approx \dfrac{1}{Q}\sum_{q=1}^Q \left\{ \frac{L_{q1}L_{k2}}{N_1N_2}\log_2\left(1+ \frac{P}{\sigma^2}\left|\frac{N_1}{\sqrt{L_{q1}}}|\gamma^{(1)}_{l_q^*,Y}\gamma^{(1)}_{l_q^*,q}|e^{j(\phi_1^{\mathrm{opt}}+\phi_b)}  + \frac{N_2}{\sqrt{L_{q2}}}|\gamma^{(2)}_{1,Y}\gamma^{(2)}_{1,q}|e^{j\phi_2^{\mathrm{opt}}}\right|^2\right) + \left(1-\frac{L_{q1}}{N_1}\right) \nonumber \right.\\
	&\hspace{1.8cm}\times \log_2\left(1+\frac{P}{\sigma^2}\left\{\frac{N_2^2}{L_{q2}}(g(L_{q2}))^2\beta^{(2)}_{Y,q} \right\}\right) + \left(1-\frac{L_{k2}}{N_2}\right) \frac{ L_{q1}}{N_1} \log_2              
    	\left(1+\frac{P}{\sigma^2}\left\{ \frac{N_2^2}{L_{q2}}(g(L_{q2}))^2\beta^{(2)}_{Y,q} \right. \right. \nonumber \\
	&\hspace{6cm}\left. \left. \left. + ({N_1^2}/{L_{q1}})\beta^{(1)}_{Y,q} + \left({\pi N_1N_2(f(L_{q2}))^2\sqrt{\beta^{(1)}_{Y,q}\beta^{(2)}_{Y,q}}}\Big/ {2\sqrt{L_{q1}\!L_{q2}}}\right) \right\}\right)\right\}. \label{eq_Joint_optim_full_ergodic_SE_MO_Y}
    \end{align}
    \hrule
    \vspace{-0.5cm}
    \end{figure*}
    \begin{proof}
        We only prove~\eqref{final_rate_operator_X} (and~\eqref{eq_Joint_optim_full_ergodic_SE_MO_X}), for MO-X; the proof of \eqref{final_rate_operator_Y} (and~\eqref{eq_Joint_optim_full_ergodic_SE_MO_Y}) is similar. Using the law of total expectation, at a given UE-$k$, the ergodic SE (for all three schemes) is
        \vspace{-0.1cm}
        \begin{equation}\label{eq:total_law_expectation}
        	\vspace{-0.1cm}
        \langle R_k \rangle = \!\sum\nolimits_{i \in \left\{\mathcal{A},\mathcal{B},\mathcal{C},\mathcal{D}\right\}}\!\langle R_k|i \rangle\Pr(i), 
        \end{equation} 
        where the probabilities can be found using \eqref{prob_A}, \eqref{prob_B}, \eqref{prob_C}, and \eqref{prob_D} for events $\mathcal{A},\mathcal{B},\mathcal{C}$, and $\mathcal{D}$, respectively. As a consequence, under RR scheduling, the ergodic sum-SE of MO-X is
        \begin{equation}\label{eq_RR_rate_UE_k}
        \langle R_X \rangle= ({1}/{K}) \sum\nolimits_{k=1}^K \langle R_k \rangle.
        \end{equation}
        Next, the scheme-specific ergodic SE is characterized below. 
        \begin{itemize}[leftmargin=*]
        \item Optimization of IRSs with time-sharing: In this scheme, $\langle R_k|i \rangle$ can be obtained from~\eqref{eq:time_share_UE_k}, \eqref{Rate_EventB_k}, \eqref{eq:time_share_UE_k}, and~\eqref{Rate_EventD} for events $\mathcal{A},\mathcal{B},\mathcal{C}$, and $\mathcal{D}$, respectively, then using~\eqref{eq:total_law_expectation} and \eqref{eq_RR_rate_UE_k}, the result in~\eqref{final_rate_operator_X} for MO-X follows.
        \item Joint-optimization of IRSs with MO cooperation: We obtain the values of $\langle R_k|i \rangle$ by using~\eqref{eq_Joint_optim_ergodic_SE_MO_X}, \eqref{Rate_EventB_k}, \eqref{Rate_case_1_EventA}, and~\eqref{Rate_EventD} for events $\mathcal{A}, \mathcal{B}, \mathcal{C}$, and $\mathcal{D}$, respectively. 
        Then, substituting these values into~\eqref{eq:total_law_expectation} and in~\eqref{eq_RR_rate_UE_k} yields the expression in~\eqref{eq_Joint_optim_full_ergodic_SE_MO_X}. \\
    \vspace{-0.4cm}
    
      \hspace{0.2cm} Further, the ergodic SE obtained at a UE by jointly optimizing the IRSs is, at most, the SE obtained by optimizing the overall phases of both the IRSs to that UE in all time slots when it is scheduled. Thus, the SE with joint optimization is upper bounded by the expressions for the time-sharing scheme given above, but with $\zeta=1$ and $\zeta=0$ for MO-X and MO-Y, respectively. This establishes the upper bounds in the Theorem.
        \item For the no cooperation scheme, $\langle R_k|i \rangle$ can be found from~\eqref{eq:rate_UE_k_ignoring_OOB_IRS}, \eqref{Rate_EventB_k}, \eqref{eq:rate_UE_k_ignoring_OOB_IRS}, \eqref{Rate_EventB_k} for events $\mathcal{A},\mathcal{B},\mathcal{C}$, and $\mathcal{D}$, respectively, and using them in~\eqref{eq:total_law_expectation}, \eqref{eq_RR_rate_UE_k}, the result follows.
\end{itemize}
This completes the proof of the theorem.
       \iffalse
       Similarly, we can prove that $\langle R_q \rangle \leq \Bigl(1-\frac{L_{q1}}{N_1}\Bigl)\log_2\Bigl(1+\frac{P}{\sigma^2}\frac{N_2^2}{L_{q2}}\beta^{(2)}_{Y,q}\Bigl) +\frac{L_{q1}}{N_1}\log_2\Bigl(1+\frac{P}{\sigma^2}\Bigl\{\frac{N_1^2}{L_{q1}}\beta^{(1)}_{y,q}+\frac{N_2^2}{L_{q2}}\beta^{(2)}_{Y,q}
        + \frac{\pi^2N_1N_2\sqrt{\beta^{(1)}_{Y,q}\beta^{(2)}_{Y,q}}} {8\sqrt{L_{q1}L_{q2}}}\Bigl\}\Bigl)$.\\
        \fi  
    \end{proof}
\end{theorem}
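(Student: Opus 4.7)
The plan is to establish the three claims by conditioning on the four mutually exclusive events $\mathcal{A}, \mathcal{B}, \mathcal{C}, \mathcal{D}$ introduced in Sec.~\ref{sec:optimization_mmWave_2-BS}, and then combining the conditional ergodic SE expressions via the law of total expectation. By symmetry, I will only detail the derivation for MO-X (yielding \eqref{final_rate_operator_X} and \eqref{eq_Joint_optim_full_ergodic_SE_MO_X}); the proof for MO-Y is the mirror image, obtained by swapping the roles of $(k,1,X) \leftrightarrow (q,2,Y)$.

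First, I would fix a UE-$k$ served by MO-X. Since the events $\mathcal{A},\mathcal{B},\mathcal{C},\mathcal{D}$ partition the sample space with probabilities \eqref{prob_A}, \eqref{prob_B}, \eqref{prob_C}, \eqref{prob_D}, respectively, the tower property gives
\begin{equation*}
\langle R_k \rangle = \sum_{i \in \{\mathcal{A},\mathcal{B},\mathcal{C},\mathcal{D}\}} \langle R_k \mid i\rangle \Pr(i),
\end{equation*}
and then round-robin scheduling yields $\langle R_X \rangle = (1/K)\sum_{k=1}^K \langle R_k\rangle$. This leaves me with the task of plugging in the correct per-event, per-scheme conditional SE derived earlier in the section.

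For the \emph{time-sharing} scheme, the conditional SE at UE-$k$ under event $\mathcal{A}$ is \eqref{eq:time_share_UE_k}, under event $\mathcal{B}$ it degenerates to \eqref{Rate_EventB_k} (since IRS-Y does not contribute), under event $\mathcal{C}$ it is again \eqref{eq:time_share_UE_k} (IRS-Y aligns to UE-$k$, so the cross-term appears whenever $\zeta T_c$ slots are used for MO-X), and under event $\mathcal{D}$ it reduces to $\langle R_k|\mathcal{B}\rangle$ via \eqref{Rate_EventD}. Substituting and collecting the coefficients that multiply $(N_1^2/L_{k1})(g(L_{k1}))^2\beta^{(1)}_{X,k}$, $(N_2^2/L_{k2})\beta^{(2)}_{X,k}$, and the cross-term with the prefactor $\pi N_1 N_2 (f(L_{k1}))^2/(2\sqrt{L_{k1}L_{k2}})$ yields~\eqref{final_rate_operator_X} with $\mathsf{CO}=1$. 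The \emph{no-cooperation} case follows by the identical bookkeeping, except that the conditional SE at UE-$k$ under event $\mathcal{A}$ now drops to \eqref{eq:rate_UE_k_ignoring_OOB_IRS} (no cross-term) and under event $\mathcal{C}$ to \eqref{eq:rate_UE_k_ignoring_OOB_IRS} as well, which is precisely what setting $\mathsf{CO}=0$ in \eqref{final_rate_operator_X} accomplishes.

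The \emph{joint-optimization} claim is the subtlest one, and I expect this to be the main obstacle. For event $\mathcal{A}$, the conditional SE is given by \eqref{eq_Joint_optim_ergodic_SE_MO_X}, in which $\phi_1^{\mathrm{opt}},\phi_2^{\mathrm{opt}}$ are produced by Algorithm~\ref{alg:One_iteration_method}; unlike the time-sharing case, no closed-form simplification is possible because $\phi^{\mathrm{opt}}$ depends nonlinearly on the instantaneous fading magnitudes $\alpha,\gamma,\beta,\delta$ through the transcendental FONC~\eqref{eq:FONC_weighted}. For the remaining events, however, joint optimization degenerates: under $\mathcal{B}$, UE-$k$ receives signal only through IRS-X so the rate reduces to \eqref{Rate_EventB_k}; under $\mathcal{C}$, only MO-X's UE-$k$ benefits from coherent combining across both IRSs, so the conditional rate is \eqref{Rate_case_1_EventA}; under $\mathcal{D}$, no alignment is possible and the rate collapses to \eqref{Rate_EventD}. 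Collecting these four contributions with their probability weights gives \eqref{eq_Joint_optim_full_ergodic_SE_MO_X} verbatim. Finally, the upper bound $\langle R_X\rangle_{\mathrm{JO}} \le \langle R_X^{\zeta}\rangle_{\mathrm{TS}}|_{\zeta=1}$ follows from a sandwich argument: in any realization, the joint-optimization rate at UE-$k$ cannot exceed the rate obtained by dedicating the overall phases of both IRSs exclusively to UE-$k$ whenever it is scheduled; this is exactly the time-sharing scheme with $\zeta=1$ for MO-X, and the inequality is preserved under expectation and under the convex combination inherent in the RR average. This completes the proof for MO-X; the MO-Y expressions \eqref{final_rate_operator_Y} and \eqref{eq_Joint_optim_full_ergodic_SE_MO_Y} are obtained identically.
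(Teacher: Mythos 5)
Your proposal is correct and follows essentially the same route as the paper's own proof: the same conditioning on events $\mathcal{A},\mathcal{B},\mathcal{C},\mathcal{D}$ with the total-expectation and RR averages, the same per-event conditional SE assignments for each of the three schemes (including \eqref{eq_Joint_optim_ergodic_SE_MO_X}, \eqref{Rate_EventB_k}, \eqref{Rate_case_1_EventA}, \eqref{Rate_EventD} for joint optimization), and the same per-realization dominance argument for the upper bound $\langle R_X\rangle_{\textrm{JO}} \le \langle R_X^{\zeta}\rangle_{\textrm{TS}}\vert_{\zeta=1}$.
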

We interpret Theorem~\ref{thm:theorem-1} as follows. At MO-X, in the time-sharing scheme, among the time slots in which the OOB IRS-Y aligns with UE-$k$ (which happens with probability $L_{k2}/N_2$), for a $\zeta$ fraction of the total time slots, the overall phase shifts of both IRSs are optimized to UE-$k$ only and procures an array gain that scales as $\mathcal{O}\left((N_1 + N_2)^2\right)$. For the other $(1-\zeta)$ fraction of time slots, since the IRSs add the signals coherently at UE-$q$ served by MO-Y, the array gain at UE-$k$ is only due to an incoherent addition of signals from the IRSs, i.e., it scales as $\mathcal{O}(N_1^2 + N_2^2)$. On the other hand, when IRS-Y does not align with UE-$k$ (which happens with probability $1-L_{k2}/N_2$), the array gain scales only as $\mathcal{O}\left(N_1^2\right)$. Similarly, under the joint-optimization scheme, whenever IRS-Y aligns with UE-$k$, but IRS-X does not align with UE-$q$, i.e., with event $\mathcal{C}$ which happens with probability $(1-L_{q1}/N_2)L_{k2}/N_2$, the solution for joint optimization boils to that obtained under time-sharing with $\zeta=1$, which procures a full array gain of $\mathcal{O}\left((N_1 + N_2)^2\right)$. When both the IRSs align with both the UEs, i.e., under event $\mathcal{A}$, due to the non-availability of closed-form expressions for $\phi_1^{\mathrm{opt}}$ and $\phi_2^{\mathrm{opt}}$, we do not have an explicit SE-scaling law. However, it is expected that the scaling of the array gain in this case lies between $\mathcal{O}\left(N_1^2 + N_2^2\right)$ and $\mathcal{O}\left((N_1+N_2)^2\right)$. Finally, when IRS-Y does not align with UE-$k$, the array gain again scales only as $\mathcal{O}(N_1^2)$. We can have similar interpretations at MO-Y as well. Based on the above, we make the following observations:
\begin{itemize}[leftmargin=*]
\item The sum-SE of MO-X scales at least as $\mathcal{O}(\log_2(N_1^2))$ in all cases, which is due to the array gain that is obtained in the absence of OOB IRSs. Thus, in general, IRS deployed by one MO \emph{does not degrade} the achievable SE at other MOs.
\item When both IRSs align to their respective OOB UEs, the best possible SE of an MO can potentially scale as $\mathcal{O}\left(\log_2(N_1+N_2)^2\right)$.  However, this is not simultaneously achievable at both MOs, as noted in the discussion following \eqref{eq:inconsistent_eq_case_A_2} in Sec.~\ref{subsec: Event A}. Further, even if it were possible to satisfy \eqref{eq:inconsistent_eq_case_A_1} and \eqref{eq:inconsistent_eq_case_A_2} simultaneously and achieve a sum-SE of $\mathcal{O}\left(\log_2(N_1+N_2)^2\right)$ at both MOs, we will next show that the resultant gain in the overall SE is small because of the low-probability nature of both IRSs aligning to OOB~UEs. 
\end{itemize}

\section{Quantifying the Effect of Out-of-Band IRSs}
\label{sec:Loss_quantify}
    In the previous section, we characterized the ergodic sum-SE of a system with $2$ MOs, each optimizing an IRS to serve its UEs. We analyzed three schemes that allow different degrees of cooperation between MOs. However, from a practical viewpoint, it is helpful to \textcolor{black}{explicitly quantify the gain/loss in the ergodic SE with/without OOB IRSs and with/without cooperation between MOs.} To that end, considering one of the MOs, say MO-X, we present the following result.
\begin{figure*}
\textcolor{black}{
%\vspace{-0.9cm}
\begin{equation}
\!\!\triangle \langle R_{\text{X}} \rangle_{\textrm{OOB}} = \frac{1}{K}\!\sum_{k=1}^K\!\frac{L_{k2}}{N_2}\!\log_2\!\left(\!1\!+\!\left\{\!\left(\!\frac{N_2^2}{L_{k2}}\beta^{(2)}_{X,k} \!
    	 +\! \mathbbm{1}_{\{\mathsf{CO} = 1\}} \!\frac{\pi N_1N_2(f(L_{k1}))^2\sqrt{\beta^{(1)}_{X,k}\beta^{(2)}_{X,k}}} {2\sqrt{L_{k1}L_{k2}}}\right)\!\!\Bigg/\!\!\left(\frac{\sigma^2}{P}\!+\!\frac{N_1^2}{L_{k1}}(g(L_{k1}))^2\beta^{(1)}_{X,k}\right)\!\right\}\!\right). \label{eq1_MO_X_SE_w_wo_OOB_IRS}
\end{equation}
\hrule}
\end{figure*}
\begin{figure*}
\textcolor{black}{
\begin{equation}
\triangle \langle R_{\text{X}} \rangle^{\textrm{L-SNR}}_{\textrm{OOB}} \approx \frac{1}{K}\sum_{k=1}^K\frac{1}{\ln(2)}\left({N_2}\beta^{(2)}_{X,k} 
    	 +\mathbbm{1}_{\{\mathsf{CO} = 1\}} \frac{\pi N_1(f(L_{k1}))^2\sqrt{\beta^{(1)}_{X,k}\beta^{(2)}_{X,k}}} {2}\cdot\sqrt{\frac{L_{k2}}{L_{k1}}}\right)\frac{P}{\sigma^2}. \label{eq3_MO_X_SE_w_wo_OOB_IRS}
\end{equation}
\hrule}
\end{figure*}
\begin{figure*}
\textcolor{black}{
\begin{equation}
\triangle \langle R_{\text{X}} \rangle^{\textrm{H-SNR}}_{\textrm{OOB}} \approx \frac{1}{K}\sum_{k=1}^K\frac{L_{k2}}{N_2}\log_2\left(1+ \left(\frac{N_2}{N_1}\right)^2\frac{L_{k1}}{L_{k2}}\cdot\frac{\beta^{(2)}_{X,k}}{(g(L_{k1}))^2\beta^{(1)}_{X,k}} + \mathbbm{1}_{\{\mathsf{CO} = 1\}}\cdot\frac{\pi N_2}{2 N_1}\cdot\frac{(f(L_{k1}))^2\sqrt{\beta^{(2)}_{X,k}L_{k1}}}{(g(L_{k1}))^2\sqrt{\beta^{(1)}_{X,k}L_{k2}}}\right). \label{eq2_MO_X_SE_w_wo_OOB_IRS}
\end{equation}
\hrule}
\end{figure*}
\color{black}
    \begin{theorem}\label{theorem-2}
       Under the SV channel model in the mmWave bands, under RR scheduling, the maximum gain in the ergodic sum-SE of MO-X  
       \begin{enumerate}[leftmargin=*]
       \item with, versus without, OOB IRS-Y  is given by~\eqref{eq1_MO_X_SE_w_wo_OOB_IRS}.
       \item with cooperation (i.e., jointly optimize/time-share the IRSs) versus no cooperation between MOs in the presence of the OOB IRS-Y is bounded as
               \begin{equation}\label{rate_gain_optimiz}
           \triangle \langle R_X \rangle_{\textrm{CO}} \leq \frac{1}{K} \sum\nolimits_{k=1}^K\frac{ L_{k2}}{N_2} \log_2\Bigl(1+\Psi(L_{k1})\frac{\pi}{4}\Bigl), 
        \end{equation} where $\Psi(L_{k1}) \triangleq (f(L_{k1}))^2\Big/g(L_{k1})$.
        \end{enumerate}
    \end{theorem}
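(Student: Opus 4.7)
The plan is to derive both claims directly from the closed-form sum-SE expressions in Theorem~\ref{thm:theorem-1} via termwise subtraction, with the only nontrivial ingredient being a single application of AM-GM for Part~2. Both parts exploit the observation that all the terms distinguishing the compared scenarios are concentrated in the event that the OOB IRS-Y aligns with one of the in-band cascaded paths of UE-$k$, which occurs with probability $L_{k2}/N_2$. To keep the algebra compact, I would first introduce the shorthands $A \triangleq (N_1^2/L_{k1})(g(L_{k1}))^2\beta^{(1)}_{X,k}$, $B \triangleq (N_2^2/L_{k2})\beta^{(2)}_{X,k}$, and $C \triangleq \pi N_1 N_2 (f(L_{k1}))^2\sqrt{\beta^{(1)}_{X,k}\beta^{(2)}_{X,k}}/(2\sqrt{L_{k1}L_{k2}})$ for the in-band, incoherent OOB, and cooperative cross contributions at UE-$k$, respectively.

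For Part~1, I would first establish the baseline ergodic sum-SE in the absence of IRS-Y: since only IRS-X then contributes to UE-$k$'s signal, the per-UE SE is simply $\log_2(1 + (P/\sigma^2)A)$ in every time slot, independent of the event partition $\mathcal{A}$--$\mathcal{D}$. For the ``with IRS-Y'' side, I would invoke $\langle R_X^\zeta\rangle_{\mathrm{TS}}|_{\zeta=1}$ from~\eqref{final_rate_operator_X} (which also serves as the upper bound for $\langle R_X\rangle_{\mathrm{JO}}$ by Theorem~\ref{thm:theorem-1}), so as to capture the maximum achievable SE at MO-X. Subtracting the two expressions and using $\log_2(1+x)-\log_2(1+y) = \log_2((1+x)/(1+y))$, the $(1-L_{k2}/N_2)$-weighted terms cancel out, leaving only a $(L_{k2}/N_2)$-weighted residual with numerator $B + \mathbbm{1}_{\{\mathsf{CO}=1\}}C$ and denominator $\sigma^2/P + A$, which is exactly~\eqref{eq1_MO_X_SE_w_wo_OOB_IRS}.

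For Part~2, I would apply the same $\zeta=1$ upper bound and take the difference between~\eqref{final_rate_operator_X} evaluated at $\mathsf{CO}=1$ and $\mathsf{CO}=0$. Only the cross term $C$ distinguishes them, and it resides inside the $(L_{k2}/N_2)$-weighted logarithm. Converting the difference yields $\frac{1}{K}\sum_k (L_{k2}/N_2)\log_2\!\bigl(1 + (P/\sigma^2)C/(1+(P/\sigma^2)(A+B))\bigr)$. I would then drop the additive $1$ in the denominator (which only loosens the bound) and apply AM-GM, $A+B \geq 2\sqrt{AB}$; a one-line computation gives $\sqrt{AB} = N_1 N_2 g(L_{k1})\sqrt{\beta^{(1)}_{X,k}\beta^{(2)}_{X,k}}/\sqrt{L_{k1}L_{k2}}$, so that $C/(A+B) \leq \pi(f(L_{k1}))^2/(4 g(L_{k1})) = \pi\Psi(L_{k1})/4$, which establishes~\eqref{rate_gain_optimiz}.

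The main obstacle I anticipate is justifying that $\zeta=1$ legitimately upper bounds both cooperation schemes at MO-X. For time-sharing this is immediate because the cross term $C$ appears only in the $\zeta$-fraction of slots allocated to MO-X, so larger $\zeta$ favors a larger cooperation gain at MO-X. For joint optimization, since Algorithm~\ref{alg:One_iteration_method} does not admit a closed form for $\phi^{\mathrm{opt}}$, I would appeal to the operational upper bound already built into Theorem~\ref{thm:theorem-1}: the weighted sum-SE objective cannot exceed the SE obtainable by dedicating all of MO-X's slots to fully aligning both IRSs to its own UE. The AM-GM step is tight when $A\approx B$ (the two IRSs contributing comparably to UE-$k$'s received power), which is also the regime in which the bound is most informative.
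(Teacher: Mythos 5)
Your proposal is correct and follows essentially the same route as the paper: both parts are obtained by differencing the closed-form expression~\eqref{final_rate_operator_X} at $\zeta=1$ (with the $N_2=0$/in-band-only baseline for Part~1 and the $\mathsf{CO}=0$ case for Part~2), using the Theorem~\ref{thm:theorem-1} upper bound $\langle R_X\rangle_{\textrm{JO}}\leq\langle R_X^{\zeta}\rangle_{\textrm{TS}}\big\vert_{\zeta=1}$ to cover joint optimization, and then bounding the cross term via $A+B\geq 2\sqrt{AB}$, exactly as in the paper's step $(b)$. Your only (harmless) refinement is to present dropping the $\sigma^2/P$ term as a strict loosening of the bound rather than the paper's ``high-SNR approximation,'' which if anything makes the argument cleaner.
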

    \color{black}
    \begin{proof} We prove the two statements seperately below.
    \subsubsection{Gain with versus without OOB IRS} To bound the gain in sum-SE obtained by MO-X with and without the OOB IRS, we consider two cases: $1)$ the OOB IRS is present, and it coherently adds the signals at the UEs of MO-X in all time slots if cooperation is allowed, and $2)$ the OOB IRS is absent. Let the sum-SE $\langle R_X^{\zeta} \rangle_{\textrm{TS}}$ given in~\eqref{final_rate_operator_X} with and without the OOB IRS be denoted by $\langle R_X^{\zeta} \rangle_{\textrm{W-IRS}}$ and $\langle R_X^{\zeta} \rangle_{\textrm{WO-IRS}}$, respectively. Then the maximum gain in SE is given by
    \begin{equation}
    \triangle \langle R_{\textrm{X}}\rangle_{\textrm{OOB}} \triangleq \langle R_X^{\zeta} \rangle_{\textrm{W-IRS}}\Big\vert_{\zeta=1} - \langle R_X^{\zeta} \rangle_{\textrm{WO-IRS}},
    \end{equation}
    where $\zeta=1$ captures that the overall phase shifts at both the IRSs are used to coherently add the signals at the UEs served by MO-X in all time slots. Then, substituting for the resulting values in the above equation and noting that the sum-SE without the OOB IRS follows by substituting $N_2 = 0$ in~\eqref{final_rate_operator_X} and recognizing that the pre-log term $L_{k2}/N_2$ is unity in the absence of the OOB IRS, we obtain~\eqref{eq1_MO_X_SE_w_wo_OOB_IRS}.    \subsubsection{Gain with versus without cooperation} We first consider time-sharing. Let the sum-SE at MO-X with and without cooperation be denoted as $\langle R_{\textrm{X}} \rangle^{\textrm{TS}}_{\textrm{W-CO}}$ and $\langle R_{\textrm{X}} \rangle^{\textrm{TS}}_{\textrm{WO-CO}}$, respectively, i.e., from~\eqref{final_rate_operator_X}, $\langle R_{\textrm{X}} \rangle^{\textrm{TS}}_{\textrm{W-CO}} = \langle R_{\textrm{X}}^{\zeta} \rangle_{\textrm{TS}}$ with $\mathsf{CO}=1$, and $\langle R_{\textrm{X}} \rangle^{\textrm{TS}}_{\textrm{WO-CO}} =  \langle R_{\textrm{X}}^{\zeta} \rangle_{\textrm{TS}}$ with $\mathsf{CO}=0$. Then, the maximum gain in the sum-SE of MO-X with versus without cooperation is
    \begin{align}
    \triangle \langle R_{\textrm{X}}\rangle_{\textrm{TS}} &\triangleq \langle R_{\textrm{X}} \rangle^{\textrm{TS}}_{\textrm{W-CO}}\Big\vert_{\zeta=1} -  \langle R_{\textrm{X}} \rangle^{\textrm{TS}}_{\textrm{WO-CO}}\\
    &\hspace{-1.6cm}=\frac{1}{K}\sum_{k=1}^K\! \frac{L_{k2}}{N_2}\log_2\left(\!\!1\!+\!\!\frac{\pi}{2}\dfrac{\frac{N_1N_2} {\sqrt{L_{k1}L_{k2}}} (f(L_{k1}))^2\sqrt{\beta^{(1)}_{X,k}\beta^{(2)}_{X,k}}}{\frac{\sigma^2}{P}\!+\!\left(\frac{N_1^2}{L_{k1}}(g(L_{k1}))^2\beta^{(1)}_{X,k}\!+\!\frac{N_2^2}{L_{k2}}\beta^{(2)}_{X,k}\right)}\!\right) \nonumber \\
       	& \stackrel{(b)}{\lessapprox} \frac{1}{K}\sum\nolimits_{k=1}^K \frac{L_{k2}}{N_2}\log_2\left(1+\Psi(L_{k1})\frac{\pi}{4}\right) \label{eq:loss_bound_CO},
         \end{align}
         where $\Psi(L_{k1})$ is as defined in the theorem and in $(b)$ we used a high SNR approximation and the fact that $\Bigl(({N_1}/{\sqrt{L_{k1}}})g(L_{k1})\sqrt{\beta^{(1)}_{X,k}} -({N_2}/{\sqrt{L_{k2}}})\sqrt{\beta^{(2)}_{X,k}}{\Bigl)}^2 \geq 0$. 
         
         Next, under the joint optimization scheme, let the SE with and without cooperation be  $\langle R_{\textrm{X}} \rangle^{\textrm{JO}}_{\textrm{W-CO}}$ and $\langle R_{\textrm{X}} \rangle^{\textrm{JO}}_{\textrm{WO-CO}}$, respectively, i.e., from~\eqref{eq_Joint_optim_full_ergodic_SE_MO_X}, $\langle R_{\textrm{X}} \rangle^{\textrm{JO}}_{\textrm{W-CO}} = \langle R_{\textrm{X}} \rangle_{\textrm{JO}}$, and $\langle R_{\textrm{X}} \rangle^{\textrm{JO}}_{\textrm{WO-CO}} =  \langle R_{\textrm{X}}^{\zeta} \rangle_{\textrm{TS}}$ with $\mathsf{CO}=0$. Then, the maximum gain in the SE is 
         \begin{align}
         \triangle \langle R_{\textrm{X}} \rangle_{\textrm{JO}} &= \langle R_{\textrm{X}} \rangle^{\textrm{JO}}_{\textrm{W-CO}} - \langle R_{\textrm{X}} \rangle^{\textrm{JO}}_{\textrm{WO-CO}}\\
         &\stackrel{(c)}{\leq} \langle R_{\textrm{X}} \rangle^{\textrm{TS}}_{\textrm{W-CO}}\Big\vert_{\zeta=1} -  \langle R_{\textrm{X}} \rangle^{\textrm{TS}}_{\textrm{WO-CO}}, \label{eq_JO_diff_final_temp}
         \end{align}
         where in $(c)$, we use Theorem~\ref{thm:theorem-1} that the SE achieved by jointly optimizing the IRSs is upper bounded by the SE when the IRSs are optimized for the UE served by MO-X in all time slots. Finally, we note that~\eqref{eq_JO_diff_final_temp} can be characterized as given in~\eqref{eq:loss_bound_CO}. Thus, under both schemes, the gain with versus without cooperation can be unified into a single expression, $\triangle \langle R_X \rangle_{\textrm{CO}}$, in~\eqref{rate_gain_optimiz}. This completes the proof.
        \end{proof}
        \color{black}
        From~\eqref{eq1_MO_X_SE_w_wo_OOB_IRS} of Theorem~\ref{theorem-2}, we observe that the gain in the sum-SE is strictly non-negative. Thus, the ergodic sum-SE at MO-X can only improve in the presence of an OOB IRS.
In particular, we also make the following observations:
    \begin{itemize} [leftmargin=*]
    \item \textbf{Gain with OOB IRS at Low-SNR}: In the low SNR regime, since $\dfrac{\sigma^2}{P}\gg 1$, we have $\dfrac{\sigma^2}{P}+\dfrac{N_1^2}{L_{k1}}(g(L_{k1}))^2\beta^{(1)}_{X,k} \approx \dfrac{\sigma^2}{P}$. Using this in~\eqref{eq1_MO_X_SE_w_wo_OOB_IRS}, and also using $\log_2(1+x)\approx \dfrac{x}{\ln(2)}$ when $x \ll 1$, we obtain the simplified expression for the gain obtained by an in-band MO due to an OOB-IRS in~\eqref{eq3_MO_X_SE_w_wo_OOB_IRS}, at low-SNR scenarios. In this case, the gain increases linearly with the number of IRS elements, particularly those of the OOB IRS, and the SNR of operation. This behavior can be attributed to the fact that the OOB IRS enables the reception of additional copies of the signal at the UE (either coherently or incoherently, depending on the level of MO cooperation) whenever the OOB IRS aligns with the UE served by MO-X (an event occurring with probability $L_{k2}/N_2$). These additional signal paths enhance the \emph{average} received SNR at the in-band UE.
    \item \textbf{Gain with OOB IRS at High-SNR}: Here, since $\dfrac{\sigma^2}{P} \ll 1$, we have $\dfrac{\sigma^2}{P}+\dfrac{N_1^2}{L_{k1}}(g(L_{k1}))^2\beta^{(1)}_{X,k} \approx \dfrac{N_1^2}{L_{k1}}(g(L_{k1}))^2\beta^{(1)}_{X,k}$. Using this approximation, along with the fact that $\log_2(1+x) \approx \log_2(x)$ when $x\gg 1$, we get the simplified expression for the SE gain in the high-SNR regime as given in~\eqref{eq2_MO_X_SE_w_wo_OOB_IRS}. Contrary to the low-SNR regime, the gain at high SNR exhibits a unimodal behavior with respect to the number of elements at the OOB IRS. This arises because, although a larger OOB IRS can potentially deliver more signal copies when aligned with the in-band UE, the probability of such alignment decreases as the number of elements at the OOB IRS increases. Initially, the alignment probability is high, allowing the overall SE gain to improve with $N_2$. However, the alignment probability decreases as $N_2$ increases, and this outweighs the logarithmic increase in the gain. This leads to an overall reduction in the gain, resulting in the observed unimodal trend. Moreover, the gain does not scale significantly with SNR, since the in-band UE already enjoys high SE at high SNR. Thus, unless the OOB IRS offers substantial additional contribution (which is less likely as $N_2$ increases), only marginal improvements are observed.
            \item The best SE gain in~\eqref{rate_gain_optimiz} obtained by cooperation is directly proportional to $L_{k2}/N_2$, but depends weakly on $L_{k1}$ through the $\Psi(L_{k1})$ term, as shown in Table~\ref{table:L_vs_Psi}. However, the gain decreases as the number of OOB IRS elements $N_2$ increases.
    \end{itemize}
     \color{black}
           
    In the next section, we extend our results to a general setting where more than two operators co-exist and each deploys its own IRS to optimally serve its UEs. 
   
\begin{table}[t]
	\centering
		\captionof{table}{Variation of $\Psi(L)$ as a function of $L$.}
	\begin{tabular}{ |c|c|c|c|c|c|c|c|c|c|c| } 
		\hline
		$L$ & 1 & 2 & 5 & 10 & 25 & 40 \\ 
		\hline
		$\Psi(L)$ & 0.79 & 0.87 & 0.93 & 0.96 &0.97 & 0.98\\ 
		\hline
	\end{tabular}
	\label{table:L_vs_Psi}
	\vspace{-0.4cm}
\end{table}
     
          \section{Performance Analysis with $M>2$ MOs} \label{sec:M-BS & M-IRS System}
 
 We consider that $M$ MOs serve a given geographical area, and their respective BSs: \{BS-$1$, BS-$2$,\ldots, BS-$M$\} provide services to \{{$K_1$}, {$K_2$}, \ldots, {$K_M$}\} UEs at the same time over non-overlapping bands. Further, UE-$k_{l}$ denotes the $k$th UE served by the $l$th MO.
For simplicity, we let the number of elements in each IRS equal $N$. At any UE served by an in-band MO, due to the presence of $M-1$ OOB IRSs, $M$ different events arise, similar to Sec.~\ref{sec:optimization_mmWave_2-BS}, and we denote them by $\mathcal E_0,\ldots, \mathcal E_{M-1}$. Specifically, $\mathcal E_m$ is the event that exactly $m$ OOB IRSs align with UE-{$k_1$} on one of the OOB paths through them. Further, the event that an OOB IRS phase configuration aligns with an OOB path is independent across the OOB IRSs, so, the number of OOB IRSs, $m$, that contribute to the channel at any UE follows a binomial distribution, i.e., $m\sim\text{Bin}\left(M-1,\frac{L_{k1}}{N}\right)$. As a consequence, $\Pr(\mathcal E_m) = {M-1 \choose m}{\Bigl(\frac{ L_{k_1}}{N}\Bigl)}^m\Bigl(1-\frac{ L_{k_1}}{N}\Bigl)^{(M-1-m)}$, where $L_{k_1}$ is the number of resolvable OOB paths at UE-$k_1$ via an IRS, and here $L_{k1}\leq N$.
 Then, similar to~\eqref{final_eq_kth_user_Event_A}, conditioned on $\mathcal E_m$, the channel from BS-1 to UE-{$k_1$} is given by 
 \begin{equation*}
 	\vspace{-0.1cm}
 	h_{k_1}\!\! =\!\! \frac{N}{\sqrt{L_{k_1}}}\biggl\{\!\Bigl|\gamma^{(1)}_{1,\textrm{B}_1}\gamma^{(1)}_{1,{k_1}}\Bigl| e^{j\phi_1}\!+\!\sum_{i=2}^{m+1} \Bigl|\gamma^{(m_i)}_{l^*,\textrm{B}_1}\gamma^{(m_i)}_{l^*,{k_1}}\Bigl| e^{j (\phi_{m_i} \!+\! \phi_{m_{i1}})}\!\!\biggl\},
 \end{equation*} 
 where symbols have similar meanings as in Sec.~\ref{sec:optimization_mmWave_2-BS}. For example, $\gamma^{(1)}_{1,\textrm{B}_1}$ denotes the channel coefficient between the first IRS (superscript) and BS-1 (second subscript) along the first (dominant) path (first subscript); $\gamma^{(1)}_{1,{k_1}}$ denotes the channel coefficient between the first IRS and the $k_1$th UE along the first path. Also, $m_i \in \{2,\ldots,M\}$, such that $m_i \neq m_j$ when $i \neq j$, i.e., $\{m_i\}_{i=2}^{m+1}$ denotes the indices of the $m$ OOB IRSs for which some $l^*$th path aligns with UE-$k_1$.
  Further, $\gamma^{(m_i)}_{l^*,\textrm{B}_1}$ and $\gamma^{(m_i)}_{l^*,k_1}$ denote the coefficients of the channels from BS-$1$ to IRS-$m_i$ and IRS-$m_i$ to UE-$k_1$ that correspond to the $l^*$th OOB cascaded path via the aligning IRS-$m_i$ which contributes to the recieved signal at UE-$k_1$. 
  Then, we model $\gamma^{(m_i)}_{l^*,\textrm{B}_1} \!\sim\! \mathcal{CN}(0,\beta_{\textrm{B}_1})$, $\gamma^{(m_i)}_{l^*,{k_1}} \! \sim\! \mathcal{CN}(0,\beta_{{k_1}})$, where $\beta_{\textrm{B}_1}$ and $\beta{_{k_1}}$ are the path losses in the BS-$1$ to IRS-$1$ and IRS-$1$ to UE-{$k_1$} links, respectively.\footnote{For simplicity of exposition, the path losses are equal across IRSs~\cite{Lin_Bai_Access_2017}.}  Finally, $\phi_{m_{i1}}$ is the phase difference of the matching cascaded OOB path at UE-$k_1$ via the $m_i$th IRS and the in-band path, and $\phi_{m_i}$ is an overall extra phase applied at IRS-$m_i$. Next, we analyze the ergodic sum-SE of MO-$1$ for different schemes as discussed in the $2$-MO case, which entails varying degrees of cooperation among the MOs.
 \vspace{ -0.2 cm}
 \subsection{Time-sharing of the IRSs with MO Cooperation} 
 In this scheme, under event $\mathcal{E}_m$, for a $\zeta_1$ fraction of time slots, all the $m$ matching OOB IRSs coherently add the signals at UE-$k_1$ and in the other $1-\zeta_1$ fraction of slots, UE-$k_1$ receives an incoherent addition of signals from the OOB IRSs. Then,
 similar to Sec.~\ref{subsec: Event A}, to maximize $|h_{k_1}|^2$ at UE-$k_1$ in the $\zeta_1$ fraction of slots, we set $\phi_{m_i} = -\phi_{m_{i1}}$, and $\phi_1=0$. The overall channel coefficient at UE-$k_1$ under event $\mathcal{E}_m$ is
 \begin{equation}\label{eq_h_k_1}
 	\vspace{-0.1cm}
 	h^{(\zeta_1)}_{k_1} \stackrel{d}{=} \frac{N}{\sqrt{L_{k1}}}\biggl\{\Bigl|\gamma^{(1)}_{1,\textrm{B}_1}\gamma^{(1)}_{1,{k_1}}\Bigl|+\sum_{i=2}^{m+1} \Bigl|\gamma^{(m_i)}_{l^*,\textrm{B}_1}\gamma^{(m_i)}_{l^*,{k_1}}\Bigl|\biggl\}, 
 \end{equation}
 and its average gain in~\eqref{eq_simplify_corr_proof} is obtained as $\mathbb{E}\Bigl[\left|h^{(\zeta_1)}_{k_1}\right|^2\Bigl] =$
  \begin{equation*}
 \frac{N^2}{L_{k_1}}\beta_{\textrm{B}_1,k_1} \biggl\{(g(L_{k1}))^2\!+\!m\Bigl(1 
 \!+\!\frac{\pi}{2}(f(L_{k1}))^2\!+\!\frac{\pi^2}{8}\frac{(m-1)}{2}\Bigl)\biggl\},
 \vspace{-0.1cm}
 \end{equation*}
where $\beta_{\textrm{B}_1,k_1} \triangleq \beta_{\textrm{B}_1}\beta_{{k_1}}$. For other $(1-\zeta_1)$ fraction of slots,
 \begin{equation}
\!\!\!\!\!h_{k_1}^{(1-\zeta_1)} \!\triangleq\! \frac{N}{\sqrt{L_{k_1}}}\biggl\{\!\Bigl|\gamma^{(1)}_{1,\textrm{B}_1}\gamma^{(1)}_{1,{k_1}}\Bigl|+\!\sum_{i=2}^{m+1} \Bigl|\gamma^{(m_i)}_{l^*,\textrm{B}_1}\gamma^{(m_i)}_{l^*,{k_1}}\Bigl| e^{j \phi_{m_{i1}}}\!\!\biggl\}, \!\!\!\!
 \end{equation}
 and its average gain can be similarly obtained as
 \begin{equation}
 \mathbb{E}\Bigl[\left|h^{(1-\zeta_1)}_{k_1}\right|^2\Bigl] = ({N^2}/{L_{k_1}})\beta_{\textrm{B}_1,k_1} \biggl\{(g(L_{k1}))^2\!+\!m\biggl\}.
 \end{equation}
 Then, by using Jensen's approximation, the ergodic sum-SE of MO-X conditioned on event $\mathcal{E}_m$ is given in~\eqref{eq_event_Em_TS_M_gg_2}. 
 
 \subsection{Joint-Optimization of IRSs with MO Cooperation}
In the joint optimization scheme, the overall phase angles at the IRSs, i.e., $\phi_{m_i}$, are chosen such that the weighted sum-SE of all UEs scheduled by every MO in a time slot is maximized. In this case, the SE of each MO under different events can be characterized similarly to the previous section. Consequently, the overall sum-SE of any given MO can be obtained similar to~\eqref{eq_Joint_optim_full_ergodic_SE_MO_X}, and we omit the details due to space constraints.
  \begin{figure*}
 	\vspace{-0.8cm}
 	\begin{equation}\label{eq_simplify_corr_proof}
 		\mathbb{E}\Bigl[\left|h_{k_1}^{(\zeta)}\right|^2\Bigl] = \frac{N^2}{L_{k_1}} \Biggl\{\mathbb{E}\biggl[{\Bigl|\gamma^{(1)}_{1,\textrm{B}_1}\gamma^{(1)}_{1,k_1}\Bigl|}^2+\sum_{i=2}^{m+1}{\Bigl|\gamma^{(m_i)}_{l^*,\textrm{B}_1}\gamma^{(m_i)}_{l^*,k_1}\Bigl|}^2+2\sum_{i=2}^{m+1}\Bigl|\gamma^{(m_i)}_{l^*,\textrm{B}_1}\gamma^{(m_i)}_{l^*,k_1}\gamma^{(1)}_{1,\textrm{B}_1}\gamma^{(1)}_{1,k_1}\Bigl|  +\sum_{i,j=2\atop {i \neq j}}^{m+1}\Bigl|\gamma^{(m_i)}_{l^*,\textrm{B}_1}\gamma^{(m_i)}_{l^*,k_1}\gamma^{(m_j)}_{l^*,\textrm{B}_1}\gamma^{(m_j)}_{l^*,k_1}\Bigl|\biggl]\Biggl\}.
 		\vspace{0.1cm}
 	\end{equation}
 	\hrule
 \end{figure*}
  \begin{figure*}
  \vspace{-0.5cm}
 \begin{multline}\label{eq_event_Em_TS_M_gg_2}
 	\langle R_{k_1}^{(\zeta_1)}|\mathcal E_m \rangle_{\textrm{TS}} \approx  \zeta_1 \log_2\biggl(1+\frac{P}{\sigma^2}\frac{N^2}{L_{k_1}}\beta_{\textrm{B}_1,k_1} \biggl\{(g(L_{k1}))^2  + m\Bigl(1 
 	+\frac{\pi}{2}(f(L_{k1}))^2+\frac{\pi^2}{8}\frac{(m-1)}{2}\Bigl)\biggl\}\biggl) \\ + (1-\zeta_1) \cdot \log_2\biggl(1+\frac{P}{\sigma^2}\frac{N^2}{L_{k_1}}\beta_{\textrm{B}_1,k_1} \biggl\{(g(L_{k1}))^2  + m\biggl\}\biggl). 
 \end{multline}
 \hrule
 \end{figure*}
 
 \subsection{No cooperation among the MOs}
When the MOs do not cooperate, the overall IRS phase-shifts are set as $\{\phi_{1},\phi_{m_i}\}_{i=2}^{m+1} = 0$. So, the channel becomes 
 \begin{equation*}
 	\vspace{-0.2cm}
 	h_{k_1} = \frac{N}{\sqrt{L_{k_1}}}\biggl\{\!\Bigl|\gamma^{(1)}_{1,\textrm{B}_1}\gamma^{(1)}_{1,{k_1}}\Bigl|+\sum_{i=2}^{m+1} \Bigl|\gamma^{(m_i)}_{l^*,\textrm{B}_1}\gamma^{(m_i)}_{l^*,{k_1}}\Bigl| e^{j \phi_{m_{i1}}}\biggl\},
	\vspace{-0.1cm}
 \end{equation*} 
for which, we have $\mathbb{E}\Bigl[|h_{k_1}|^2\Bigl] = \frac{N^2}{L_{k_1}}\beta_{\textrm{B}_1,k_1} \biggl\{(g(L_{k1}))^2\!+\!m\biggl\}$. Then the ergodic SE of MO-$1$, under event $\mathcal{E}_m$, is
\vspace{-0.1cm}
\begin{equation*}
		\langle R_{k_1}|\mathcal E_m\rangle_{\mathrm{NCO}} \approx \log_2\biggl(1+\frac{P}{\sigma^2}\frac{N^2}{L_{k1}}\beta_{\textrm{B}_1,k_1}\Bigl\{(g(L_{k1}))^2+m\Bigl\}\biggl).
\vspace{-0.1cm}
\end{equation*}

 We next characterize the overall ergodic sum-SE of a MO (say MO-$1$) when $M>2$ MOs coexist, similar to Theorem~\ref{thm:theorem-1}.

 \begin{theorem}\label{corollary:1}
    Under the SV channel model in the mmWave bands, when $M>2$ MOs control an IRS each to serve its subscribed UEs, the ergodic sum-SE of MO-$1$ under RR scheduling is characterized as:  	\begin{enumerate}[leftmargin=*]
 		\item Optimization of IRSs with time sharing where $\zeta_1 T_c$ time slots alloted to MO-X and $(1-\zeta_1)T_c$ time slots are alloted to other MOs: $\langle R_1^{\zeta_1} \rangle_{\textrm{TS}}$, as given in~\eqref{thm_eq_corr_1} with $\mathsf{CO}=1$.
		\item Joint-optimization of IRSs with MO cooperation: The ergodic rate for MO-$1$ can be obtained similar to~\eqref{eq_Joint_optim_full_ergodic_SE_MO_X} with $\phi_{m_i}^{\mathrm{opt}}$ is determined similar to Algorithm~\ref{alg:One_iteration_method}. Further, an upper bound on the SEs $\langle R_1 \rangle_{\textrm{JO}}$ without relying on $\phi_{m_i}^{\mathrm{opt}}$ can be obtained as $\langle R_1 \rangle_{\textrm{JO}} \leq \langle R_1^{\zeta_1} \rangle_{\textrm{TS}}\Big\vert_{\zeta_1=1}$, 
	where $\langle R_X^{\zeta_1} \rangle_{\textrm{TS}}$ is given in~\eqref{thm_eq_corr_1} with $\mathsf{CO}=1$.
 		\item No MO Cooperation: $\langle R_1 \rangle_{\textrm{NCO}}$ as in~\eqref{thm_eq_corr_1} with $\mathsf{CO}=0$.
 	\end{enumerate}
 	\begin{figure*}
 		\begin{multline} \label{thm_eq_corr_1}
 			 \langle R_{1}^{(\zeta_1)} \rangle_{\textrm{TS}} \approx  \frac{1}{K_1}\sum_{k_1=1}^{K_1}\sum_{m=0}^{M-1} {M-1 \choose m}{\biggl(\frac{ L_{k_1}}{N}\biggl)}^m{\biggl(1-\frac{ L_{k_1}}{N}\biggl)}^{\!(M\!-\!m\!-\!1)}\hspace{-0.4cm}\times\left\{\zeta_1\log_2\biggl(1+\frac{P}{\sigma^2}\frac{N^2}{L_{k_1}}\beta_{\textrm{B}_1,k_1} \biggl\{(g(L_{k1}))^2+m + \right. \\
 			\left. \mathbbm{1}_{\{\mathsf{CO} = 1\}} \ m\frac{\pi}{2}\Bigl((f(L_{k1}))^2 +\frac{\pi(m-1)}{8}\Bigl)\biggl\}\biggl)+ (1-\zeta_1)\cdot\log_2\biggl(1+\frac{P}{\sigma^2}\frac{N^2}{L_{k_1}}\beta_{\textrm{B}_1,k_1} \biggl\{(g(L_{k1}))^2  + m\biggl\}\biggl)\right\},
 		\end{multline}
 		\hrule
 	\end{figure*}
	\begin{figure*}
 		\begin{equation} \label{eq1_MO_X_SE_w_wo_OOB_IRSs_M_ge_2}
 			\!\!\Delta \langle R_{1} \rangle_{\textrm{OOB}} \approx  \frac{1}{K_1}\sum_{k_1=1}^{K_1}\sum_{m=1}^{M-1} {M-1 \choose m}{\biggl(\frac{ L_{k_1}}{N}\biggl)}^m{\biggl(1-\frac{ L_{k_1}}{N}\biggl)}^{\!(M\!-\!m\!-\!1)}\hspace{-0.7cm}\times\log_2\!\left(\!1+\frac{m+\!\mathbbm{1}_{\{\mathsf{CO} = 1\}}m\frac{\pi}{2}\Bigl((f(L_{k1}))^2 +\frac{\pi(m-1)}{8}\Bigl)}{(g(L_{k1}))^2}\right),
 		\end{equation}
		\vspace{-0.1cm}
 		\hrule
 	\end{figure*}
	 \begin{figure*}
	 %\hrule
	 \vspace{0.1cm}
 \begin{equation}\label{eq:M_IRS_M_BS_eqn_2}
 		\triangle \langle R_1 \rangle_{\textrm{CO}} \!\leq \frac{1}{{K_1}}\!\sum_{k_1=1}^{K_1}\!\sum_{m=1}^{M-1}\! {M-1 \choose m}{\biggl(\frac{ L_{k_1}}{N}\biggl)}^m\!\!{\biggl(1-\frac{ L_{k_1}}{N}\biggl)}^{\!\!(M\!-\!m\!-\!1)}  \!\!\times
 		\log_2\biggl(1+\frac{\pi\sqrt{m}}{4}\left\{\Psi(L_{k1}) +\frac{\pi(m-1)}{8g(L_{k1})}\right\}\biggl),
 	\end{equation}
	\hrule
	\vspace{-0.5cm}
 \end{figure*}
 \end{theorem}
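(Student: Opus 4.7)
The overall strategy mirrors the $M=2$ argument in Theorem~\ref{thm:theorem-1}: condition on the event $\mathcal{E}_m$ that exactly $m$ of the $M-1$ OOB IRSs align with the scheduled UE-$k_1$, compute the conditional ergodic SE under each of the three schemes using Jensen's approximation, and then remove the conditioning via the law of total expectation and round-robin averaging over the $K_1$ UEs of MO-$1$. The probability $\Pr(\mathcal{E}_m)$ is already in hand via the binomial model, so the main technical work is to identify $\mathbb{E}[|h_{k_1}|^2 \mid \mathcal{E}_m]$ separately for (i) time-sharing, (ii) joint optimization, and (iii) no cooperation, and to patch these into the $\mathsf{CO}=1/0$ template of~\eqref{thm_eq_corr_1}.

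For the time-sharing scheme, I would start from the identity~\eqref{eq_simplify_corr_proof}, which expands $|h_{k_1}^{(\zeta_1)}|^2$ into four groups: the in-band square term, the $m$ OOB square terms, the $2m$ cross terms between in-band and aligning OOB paths, and the $m(m-1)$ cross terms among the aligning OOB paths. Each factor is a product of two independent Rayleigh variables (up to scaling), so Lemma~\ref{lemma:lemma1} gives $\mathbb{E}[|\gamma^{(1)}_{1,\textrm{B}_1}\gamma^{(1)}_{1,k_1}|^2]=(g(L_{k1}))^2\beta_{\textrm{B}_1,k_1}$, $\mathbb{E}[|\gamma^{(m_i)}_{l^*,\textrm{B}_1}\gamma^{(m_i)}_{l^*,k_1}|^2]=\beta_{\textrm{B}_1,k_1}$, and the cross terms contribute $(\pi/2)(f(L_{k1}))^2\beta_{\textrm{B}_1,k_1}$ per in-band/OOB pair and $(\pi^2/8)\beta_{\textrm{B}_1,k_1}$ per OOB/OOB pair by independence and the identity $\mathbb{E}[|X|]^2=(\pi/4)\mathbb{E}[|X|^2]$ for $X\sim\mathcal{CN}(0,1)$. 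Plugging into~\eqref{eq_simplify_corr_proof} reproduces the bracketed factor in~\eqref{eq_event_Em_TS_M_gg_2}. The analogous incoherent expectation (for the $1-\zeta_1$ fraction and for the no-cooperation scheme) follows at once because the cross terms vanish in expectation once the $\phi_{m_{i1}}$'s are uniform on $[-\pi,\pi)$ and independent, leaving only the square terms $\{(g(L_{k1}))^2+m\}\beta_{\textrm{B}_1,k_1}$.

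With the two conditional expressions in hand, Jensen's approximation gives $\langle R_{k_1}^{(\zeta_1)}\mid \mathcal{E}_m\rangle_{\mathrm{TS}}$ exactly as in~\eqref{eq_event_Em_TS_M_gg_2}, and the no-cooperation conditional SE reduces to the same expression with the coherent-combining correction term deleted. Taking the total expectation
\begin{equation*}
\langle R_{k_1}\rangle=\sum_{m=0}^{M-1}\Pr(\mathcal{E}_m)\,\langle R_{k_1}\mid \mathcal{E}_m\rangle,\quad \Pr(\mathcal{E}_m)=\binom{M-1}{m}\!\left(\tfrac{L_{k_1}}{N}\right)^{\!m}\!\left(1-\tfrac{L_{k_1}}{N}\right)^{\!M-1-m}\!\!,
\end{equation*}
and averaging over the $K_1$ UEs of MO-$1$ under round robin gives $\langle R_1^{\zeta_1}\rangle_{\mathrm{TS}}=\frac{1}{K_1}\sum_{k_1}\langle R_{k_1}\rangle$. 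A single indicator $\mathbbm{1}_{\{\mathsf{CO}=1\}}$ in front of the cross-term block unifies the time-sharing ($\mathsf{CO}=1$) and no-cooperation ($\mathsf{CO}=0$) expressions into~\eqref{thm_eq_corr_1}, completing parts (1) and (3).

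For the joint-optimization upper bound in part (2), I would reuse the argument from Theorem~\ref{thm:theorem-1}: since maximizing the weighted sum-SE over $\{\phi_{m_i}\}$ can only do worse than dedicating every slot, in each event, to coherently combining all matching OOB contributions at UE-$k_1$, we have $\langle R_1\rangle_{\mathrm{JO}}\le \langle R_1^{\zeta_1}\rangle_{\mathrm{TS}}\big|_{\zeta_1=1}$, which yields the claimed bound. The explicit expression in terms of $\phi_{m_i}^{\mathrm{opt}}$ is a direct generalization of~\eqref{eq_Joint_optim_full_ergodic_SE_MO_X}: the only change is that the first event in that expression is replaced by the summation over $m$ of the binomial-weighted conditional SE with coherent combining through $m$ out of $M-1$ OOB IRSs, where $\phi^{\mathrm{opt}}$ is obtained by the natural multi-variable extension of Algorithm~\ref{alg:One_iteration_method}. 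The main obstacle I anticipate is purely bookkeeping: correctly counting the $m(m-1)/2$ distinct OOB/OOB cross pairs (hence the $(m-1)/2$ factor inside~\eqref{eq_event_Em_TS_M_gg_2}) and ensuring that the uniformity of the phase differences $\phi_{m_{i1}}-\phi_{m_{j1}}$ across distinct $i,j$ is used only in the \emph{incoherent} expansion; everything else follows by routine substitution from the $M=2$ analysis.
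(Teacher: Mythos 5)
Your proposal is correct and follows essentially the same route as the paper: condition on the binomial-distributed alignment events $\mathcal{E}_m$, evaluate $\mathbb{E}[|h_{k_1}|^2\mid\mathcal{E}_m]$ for the coherent/incoherent cases via the expansion in~\eqref{eq_simplify_corr_proof} and Lemma~\ref{lemma:lemma1} (which is exactly how~\eqref{eq_event_Em_TS_M_gg_2} is obtained in the text preceding the theorem), then apply Jensen's approximation, the law of total expectation, and RR averaging, with the joint-optimization bound inherited from the $\zeta_1=1$ time-sharing argument of Theorem~\ref{thm:theorem-1}. Your cross-term bookkeeping ($\pi/2\,(f(L_{k1}))^2$ per in-band/OOB pair, $\pi^2/8$ per unordered OOB/OOB pair, giving the $(m-1)/2$ factor) matches the paper's computation, so no gaps.
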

 \vspace{-0.4 cm}
 \begin{proof}
 By the law of total expectation, $\langle R_{k_1}\rangle = \sum_m\langle R_{k_1}|\mathcal E_m \rangle \Pr(\mathcal E_m)$, and under RR scheduling, we note $\langle R_{1} \rangle \triangleq \frac{1}{K_1}\sum_{k_1=1}^{K_1} \langle R_{k_1} \rangle$. Using the values of $\langle R_{k_1} \rangle$ under the three schemes in $\langle R_{1} \rangle$ completes the proof.
 \end{proof}
 \vspace{-0.2cm}
 Next, similar to Theorem 2, we can characterize the gain in the sum-SE due to the presence of the OOB IRSs over that in the absence of OOB IRSs as well as the gain due to cooperation over the no-cooperation case. We illustrate this in the following result, which shows that cooperation offers only a marginal improvement in the sum-SE.
 \begin{theorem}\label{corollary:2}
 Under the SV channels in mmWaves and RR scheduling, the maximum gain in the ergodic sum-SE of MO-1  
       \begin{enumerate}[leftmargin=*]
       \item with vs. without OOB IRSs is approximately given by~\eqref{eq1_MO_X_SE_w_wo_OOB_IRSs_M_ge_2}.
       \item with cooperation (i.e., jointly optimize/time-share the IRSs) vs. no cooperation between MOs in the presence of the OOB IRSs is bounded as in~\eqref{eq:M_IRS_M_BS_eqn_2}.        \end{enumerate}
 \end{theorem}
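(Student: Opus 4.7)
The plan is to mimic the two-MO derivation in Theorem~\ref{theorem-2}, using Theorem~\ref{corollary:1} as the launching point, but now averaging over the binomial-distributed number of aligning OOB IRSs. Throughout, set $\zeta_1=1$ in \eqref{thm_eq_corr_1} to maximize the contribution of the OOB IRSs to MO-$1$, and note that the joint-optimization sum-SE is upper bounded by this quantity by the same argument used to establish $\langle R_X\rangle_{\textrm{JO}}\leq \langle R_X^\zeta\rangle_{\textrm{TS}}\bigl|_{\zeta=1}$ in Theorem~\ref{thm:theorem-1}. Thus, both cooperation schemes can be analyzed via a single unified expression, indexed by the Boolean $\mathsf{CO}$.

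For part $1)$, I would define the sum-SE without any OOB IRS as the restriction of \eqref{thm_eq_corr_1} to $m=0$, i.e., $\frac{1}{K_1}\sum_{k_1}\log_2\!\bigl(1+\tfrac{P}{\sigma^2}\tfrac{N^2}{L_{k_1}}(g(L_{k1}))^2\beta_{\textrm{B}_1,k_1}\bigr)$. Subtracting this from \eqref{thm_eq_corr_1} with $\zeta_1=1$ and using $\sum_{m=0}^{M-1}\Pr(\mathcal{E}_m)=1$ allows the $m=0$ term to be absorbed into the sum, yielding for each $k_1$ a sum of log-ratios weighted by $\Pr(\mathcal{E}_m)$ over $m\geq1$. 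A standard high-SNR approximation, $\log_2(1+a)-\log_2(1+b)\approx\log_2(a/b)$ when both $a,b\gg1$, collapses each log-ratio into $\log_2\!\bigl(1+\bigl\{m+\mathbbm{1}_{\{\mathsf{CO}=1\}}m\tfrac{\pi}{2}((f(L_{k1}))^2+\tfrac{\pi(m-1)}{8})\bigr\}/(g(L_{k1}))^2\bigr)$, which is precisely \eqref{eq1_MO_X_SE_w_wo_OOB_IRSs_M_ge_2}.

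For part $2)$, the relevant difference is the sum-SE of \eqref{thm_eq_corr_1} with $(\zeta_1,\mathsf{CO})=(1,1)$ minus the same expression with $\mathsf{CO}=0$. Inside each event $\mathcal{E}_m$ ($m\geq1$), the high-SNR approximation reduces this difference to $\log_2\!\bigl(1+\tfrac{m(\pi/2)((f(L_{k1}))^2+\pi(m-1)/8)}{(g(L_{k1}))^2+m}\bigr)$. The key inequality is then the elementary AM--GM bound $(g(L_{k1}))^2+m\geq 2\sqrt{m}\,g(L_{k1})$, which is the $M>2$ analog of the step $\bigl(({N_1}/{\sqrt{L_{k1}}})g(L_{k1})\sqrt{\beta^{(1)}_{X,k}}-({N_2}/{\sqrt{L_{k2}}})\sqrt{\beta^{(2)}_{X,k}}\bigr)^2\geq0$ used in proving \eqref{rate_gain_optimiz}. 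Substituting this lower bound on the denominator and simplifying yields $\log_2\!\bigl(1+\tfrac{\pi\sqrt{m}}{4}\bigl\{\Psi(L_{k1})+\tfrac{\pi(m-1)}{8g(L_{k1})}\bigr\}\bigr)$, and weighting by $\Pr(\mathcal{E}_m)$ and averaging over UEs delivers the bound in \eqref{eq:M_IRS_M_BS_eqn_2}. The joint-optimization case is handled identically using the upper bound on $\langle R_1\rangle_{\textrm{JO}}$ from Theorem~\ref{corollary:1}.

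The main obstacle I expect is bookkeeping rather than a genuine analytical difficulty: the cooperative SNR in \eqref{thm_eq_corr_1} contains three qualitatively different contributions (the in-band $(g(L_{k1}))^2$ term, the $m$ incoherent OOB contributions, and the cross-term $\frac{\pi}{2}m((f(L_{k1}))^2+\pi(m-1)/8)$ arising from coherent combining among the aligning IRSs), and care is needed to ensure that only the last piece survives in the ratio after subtracting the no-cooperation SNR. Once this bookkeeping is done, the high-SNR approximation and the AM--GM step are routine, and the unification of the time-sharing and joint-optimization schemes under a single expression proceeds exactly as in the $M=2$ case of Theorem~\ref{theorem-2}.
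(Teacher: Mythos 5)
Your proposal is correct and follows exactly the route the paper intends: the paper's own proof is deferred ("similar to Theorem~\ref{theorem-2}"), and your derivation fills it in faithfully — subtracting the $m=0$ (no-OOB) baseline, respectively the $\mathsf{CO}=0$ expression, from \eqref{thm_eq_corr_1} at $\zeta_1=1$, applying the high-SNR simplification to obtain \eqref{eq1_MO_X_SE_w_wo_OOB_IRSs_M_ge_2}, and using $(g(L_{k1})-\sqrt{m})^2\geq 0$ as the analog of the 2-MO completing-the-square step to arrive at \eqref{eq:M_IRS_M_BS_eqn_2}, with the joint-optimization case absorbed via the upper bound from Theorem~\ref{corollary:1}. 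No gaps.
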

 \begin{proof}
It is similar to Theorem~\ref{theorem-2}. We omit for brevity.
	 \end{proof}
Considering the $m=1$ term in \eqref{eq1_MO_X_SE_w_wo_OOB_IRSs_M_ge_2} and \eqref{eq:M_IRS_M_BS_eqn_2}, and comparing them with \eqref{eq2_MO_X_SE_w_wo_OOB_IRS} and \eqref{rate_gain_optimiz} respectively, we see that the gain in SE due to the presence of $M$ IRSs scales approximately as $M-1$ times the gain in the 2-MO case. This is because there are $M-1$ OOB IRSs that can potentially align with a given UE. In addition, we can obtain further gains in the SE, captured by the summands corresponding to $m=2$ to $m=M-1$, when $m\ge 2$ OOB IRSs happen to be aligned to the UE. However, the event that $m$ IRSs align with a UE occurs with exponentially lower probability due to the $(L/N)^m$ term. Thus, the presence of $M$ MOs does not degrade the sum SE of a given operator; in fact, it provides a gain in the sum SE that increases at least linearly with $M$. 

 In the next section, we numerically illustrate our findings via Monte Carlo simulations.    
 
 \vspace{-0.25cm}   
		\section{Numerical Results and Discussion}\label{sec:numerical_sections}
		We first illustrate the results for $2$ MOs as in Sec.~\ref{sec:optimization_mmWave_2-BS} and~\ref{sec:Loss_quantify}.
		\vspace{-0.5cm}
    \subsection{$2$-MO \& $2$-IRS System}\label{sec:2-MO-numerical-results}
BS-X and BS-Y are located at $(0,200)$, and $(200,0)$ (in meters), and IRS-X and IRS-Y are located at $(0,0)$ and $(200,200)$, respectively. All UEs are uniformly located in a rectangular region with diagonally opposite corners $(0,0)$ and $(200,200)$. The path loss is modeled as $\beta = C_0\left(d_0/d\right)^\alpha$, where $C_0=-60$ dB is the path loss at the reference distance $d_0=1$ m, $d$ is the distance between nodes, and $\alpha$ is the path loss exponent~\cite{RuiZhang_TWC_2019}. 
	\textcolor{black}{We use $\alpha=2$ for both BS-IRS, and the IRS-UE paths~\cite{Wang_TVT_2020}.} We use RR scheduling to serve  $K\!=\!Q\!=\!10$ UEs over $1000$ time slots by the respective MOs. \\
\begin{figure}[t]
\vspace{-0.2cm}
\centering
\includegraphics[width=\linewidth]{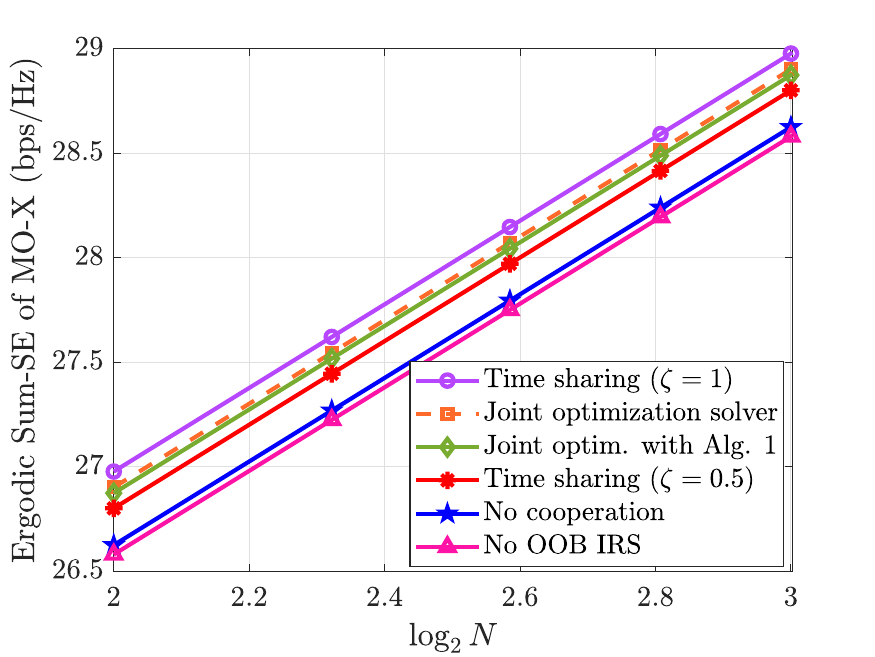}
\caption{Ergodic sum-SE of MO-X vs. $\log_2 N$ conditioned on Event $\mathcal{A}$ at $C_0\gamma=150$ dB and \textcolor{black}{$L_{k1}=1, L_{k2}=10$.}}
\label{fig:In_band_UE_vs_log_2N}
  \vspace{-0.2cm}
\end{figure}
\begin{figure}[t]
	\centering
	\includegraphics[width=\linewidth]{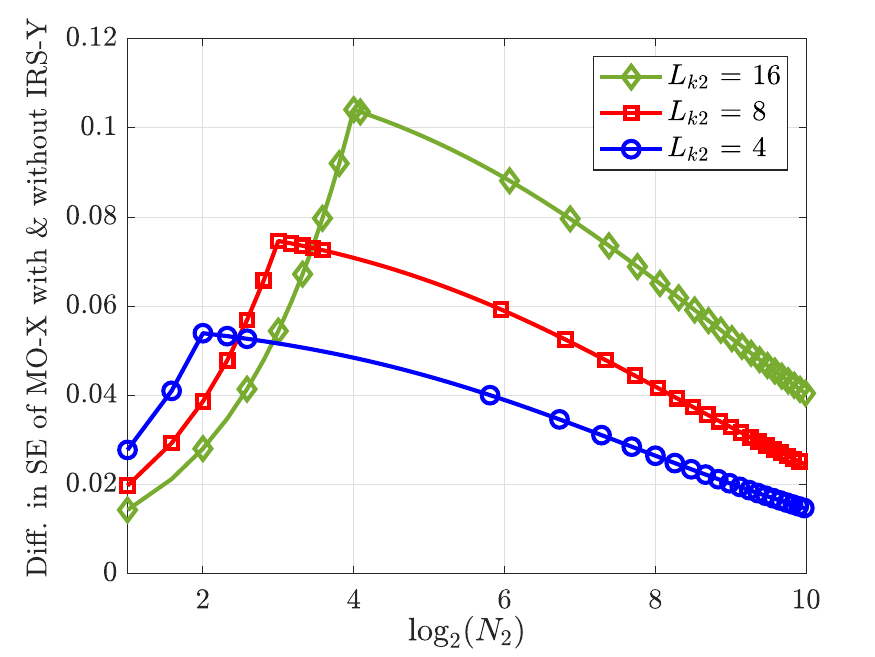}
	\caption{Diff. in sum-SE of MO-X with and without OOB IRS vs. $\log_2(N_2)$.}
	\label{fig:rate_difference_iib_not_changing}
	  \vspace{-0.2cm}
\end{figure}
\begin{figure}[!t]
\centering
\includegraphics[width=\linewidth]{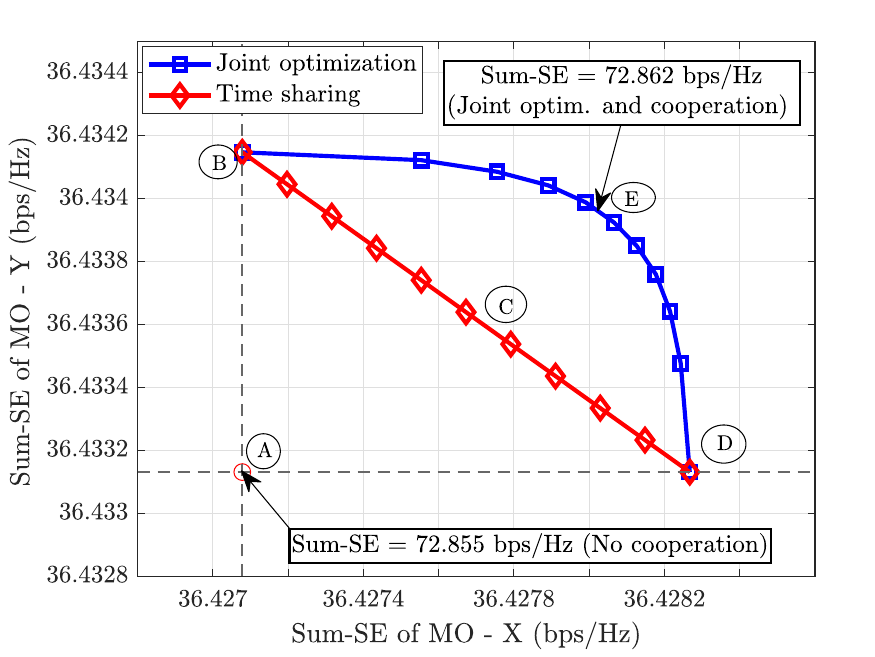}
\caption{Rate Region of the MOs at \textcolor{black}{$L_{k1} = 1, L_{k2} = 5$, $L_{q1} = 8, L_{q2} = 1$}.}
\label{fig:rate_region}
\vspace{-0.2cm}
\end{figure}
In Fig.~\ref{fig:In_band_UE_vs_log_2N},
we plot the achievable ergodic sum-SE of MO-X vs. $\log_2 N$ (where $\!N=\!N_1\!=\!N_2$), under event $\mathcal{A}$, and study the performance of the three schemes described in Sec.~\ref{subsec: Event A}. \textcolor{black}{The SE, when jointly optimal overall IRS phases are used to maximize the equal-weighted sum-SE of UEs scheduled by the MOs using a general high complexity off-the-shelf solver using the \texttt{findpeaks} function of MATLAB (curve labeled \texttt{Joint optimization solver}) nearly overlaps with that obtained using the low-complexity single-iteration Algorithm~\ref{alg:One_iteration_method} (curve labeled \texttt{Joint optim. with Alg. 1}). This shows that the proposed single-iteration Newton's algorithm is a practically viable solution, offering near-optimal performance with significantly reduced complexity. This effectiveness is largely due to the carefully chosen initialization strategy (given in lines $1$--$5$ of Algorithm~\ref{alg:One_iteration_method}), which provides provable convergence guarantees~\cite[Theorem $9.1$]{chong2013introduction}.} 
Further, the SE using the joint optimization scheme is only slightly inferior to the time-sharing scenario with $\zeta = 1$ (curve labeled \texttt{Time sharing $\zeta=1$}). 
This is because no UE gets the full array gain of $\mathcal{O}((N_1+N_2)^2)$ in any time slot under a joint-optimization scheme. 
The performance obtained by MO-X with time-sharing ($\zeta = 1$) is about $0.5$ bps/Hz higher than that obtained without an OOB IRS (curve labeled \texttt{No OOB IRS}). This is because, under event $\mathcal{A}$, the OOB IRS approximately doubles the SNR at the UE (thereby improving the SE by $1$ bps/Hz) when the UE is closer to the OOB IRS than the in-band IRS. On the other hand, when the UE is closer to the in-band IRS than the OOB IRS, the SNR is nearly the same as that in the absence of the OOB IRS. These two events are equally likely under the simulation setup considered, hence, the average gain in SE through cooperation is about $0.5$ bps/Hz. The performance obtained by MO-X with no cooperation nearly matches with that obtained in the absence of the OOB IRS (the bottom two curves), because the SNR gain from the OOB IRS under event $\mathcal{A}$ is negligible when the overall phase of the IRS is arbitrary. More importantly, the OOB IRS does not degrade the SE even if the MOs do not cooperate. Finally, the ergodic sum-SE of \textcolor{black}{MO-X} is log-quadratic in $N$ in all scenarios, thus, the array gain from IRS-X is always obtained.
 
\indent Next, in Fig.~\ref{fig:rate_difference_iib_not_changing}, for a fixed number of elements at IRS-X (at $N_1=64$), we plot the difference between the ergodic sum-SE of MO-X obtained in the presence and absence of the OOB IRS-Y vs. the number of OOB IRS elements (in the log-domain) as a function of the number of OOB paths, $L_{k2}$. To capture the maximum possible difference, we consider that whenever IRS-Y aligns to the in-band UE of MO-X, both MOs cooperate and implement the time-sharing scheme with $\zeta=1$. 
Then, for a given $L_{k2}$, we observe that the difference is non-negative, and further, this gain in the SE due to the presence of OOB IRS-Y is an unimodal function in $N_2$ with the peak occurring at $N_2 = L_{k2}$. This is in line with the theoretical expression given by~\eqref{eq2_MO_X_SE_w_wo_OOB_IRS} in 
Theorem~\ref{theorem-2}. 
Intuitively, at smaller values of $N_2$, with high probability, IRS-Y aligns with the in-band UE of MO-X; so, when $N_2$ increases, the overall SNR increases for MO-X. However, for larger values of $N_2$, the probability that the IRS-Y aligns to MO-X's UE becomes small, in turn causing the change in the SE with and without an OOB IRS to decrease. Finally, as $L_{k2}$ increases, the gain again increases because the probability of IRS-Y aligning to MO-X increases, which further enhances the channel gain at in-band UEs of MO-X.
 Thus, an OOB IRS benefits MO-X more when there are many paths via the OOB IRS at the UEs served by MO-X.\\ 
\indent In Fig.~\ref{fig:rate_region}, we plot the achievable rate regions of the two MOs (\textcolor{black}{normalized by the bandwidths}) for $N_1 = N_2 = 256$ and $C_0\gamma=150$ dB under two different schemes: $1)$ time-sharing (corresponding to the curve \encircle{B}-\encircle{C}-\encircle{D} with \encircle{B}, \encircle{C}, and \encircle{D} obtained at $\zeta=0, 0.5$, and $1$, respectively), and $2)$ weighted-sum-SE joint optimization given in Algorithm~\ref{alg:One_iteration_method} (corresponding to the curve \encircle{B}-\encircle{E}-\encircle{D}). The sum-SE obtained via Algorithm~\ref{alg:One_iteration_method} upper bounds the sum-SE of the time-sharing scheme. This is because, in the former, the IRS overall phases are jointly optimal to scheduled UEs of both the MOs in any time slot. Also, the achievable sum-SE under the joint optimization peaks at the point at \encircle{$E$} when $w_1=w_2=0.5$. On the same plot, point \encircle{$A$}, which denotes the no cooperation scenario, provides a sum-SE that is smaller than that obtained by cooperation. In any case, the overall gain between the points \encircle{$A$} and \encircle{$E$} is small due to the sparse scattering of mmWave channels.
Therefore, while the presence of the OOB IRS always enhances the ergodic SE achieved by the UEs served by all MOs, the additional gain obtained via optimizing the overall phase of the IRS is marginal.  

  \vspace{-0.2cm}  
  \subsection{$M>2$-MO \& $M>2$-IRS System}
    \begin{figure}[!t]
\centering
\includegraphics[width=\linewidth]{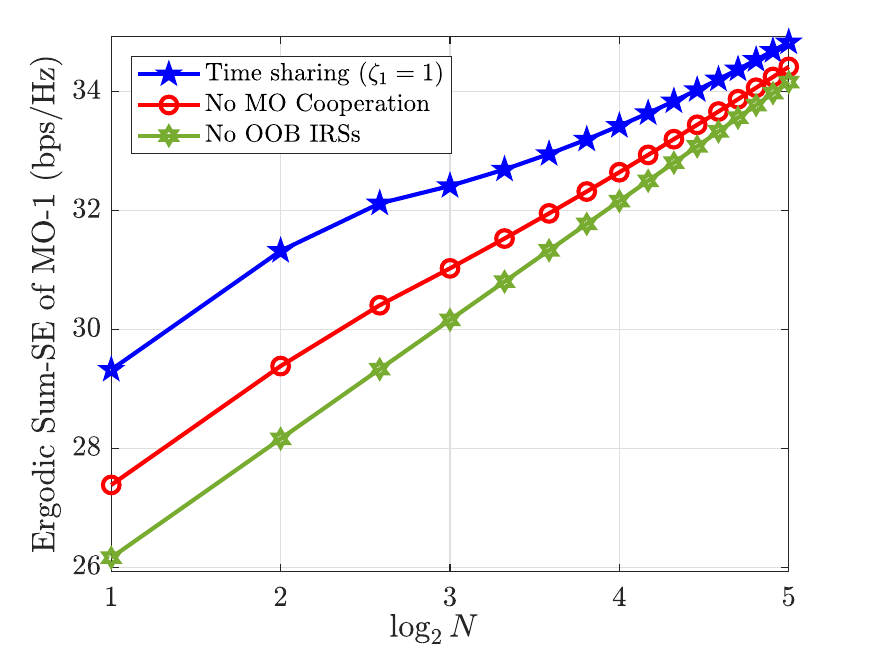}
\caption{Sum-SE of MO-$1$ vs. $\log_2 N$ with $4$-MOs, $C_0\gamma=150$ dB, $L = 5$.}
\label{fig:M_BS_M_IRS}
  \vspace{-0.2cm}
\end{figure}
We next investigate the performance obtained with more than $2$ MOs for different schemes with and without cooperation. We consider 4 MOs, with each MO deploying an IRS to serve its UEs optimally. The BSs of MO-$1,2,3,$ and $4$ are located at $(0,0)$, $(200,0)$, $(200,200)$, and $(0,200)$ (in meters), respectively, and the IRSs are located in a circular region centered at $(100,100)$ with radius $5$ meters.
The rest of the settings are the same as considered for the $2$-MO case. 

In Fig.~\ref{fig:M_BS_M_IRS}, we plot the ergodic sum-SE of MO-$1$ vs. $\log_2 N$ for $C_0\gamma = 150$ dB, and $L=5$, where $N$ is the number of IRS elements in each IRS, and investigate the SE performance of MO-$1$ for three extreme scenarios: $a)$ time-sharing with $\zeta_1=1$, $b)$ no MO cooperation scheme, and in $3$) absence of all IRSs except IRS-$1$ which is deployed by MO-$1$. We observe that the SE of MO-$1$ in the presence of OOB IRSs strictly outperforms the achievable SE in the absence of OOB IRSs. This is because, in addition to the in-band IRS, the OOB IRSs contribute to the signal strength at the UE served by MO-$1$. With cooperation, the performance can be further improved by ensuring the coherent addition of signals arriving at the UE via the contributing IRSs. 
However, for large $N$, the probability that an OOB IRS aligns with a given UE becomes small, and the SE in the presence of OOB IRSs coincides with that obtained in the absence of OOB IRSs. 
Nonetheless, even with an arbitrary number of MOs, the ergodic SE of an MO does not degrade due to the presence of uncontrollable IRSs deployed by other MOs.

\section{Conclusions}
In this paper, we addressed an important problem in IRS-aided practical mmWave wireless systems, namely, the effect of IRSs deployed by one MO on the performance of another co-existing OOB MO. Starting with the case where $2$ MOs each deploy an IRS to optimally serve its UEs, we first examined different scenarios that arise due to the impact of an IRS on the OOB MOs. Subsequently, we derived the ergodic sum-SE of the MOs under three different schemes. In the first scheme, the MOs cooperate and jointly optimize an overall phase angle of the IRSs; in the next, the MOs only cooperate by optimizing the overall phase in a time-sharing manner, and in the final scheme, the MOs do not cooperate and function independently. Our key findings were two-fold: $1$) even when the MOs do not cooperate, the IRS of one MO does not degrade the sum-SE of another MO, and $2)$ the best possible gain obtained in the sum-SE by allowing for MO cooperation compared to no cooperation scheme decreases inversely with the number of IRS elements in the OOB MO. The primary reason behind these observations is the spatial sparsity in the mmWaves band channels. This avoids degradation due to the OOB IRS and also makes significant enhancement unlikely. We extended our results to a system with more than $2$ MOs, and showed that a given MO's performance improves linearly with the number of OOB MOs in the area. 
\textcolor{black}{Future work could include extending our results to interference-limited scenarios and accounting for multi-user and inter-cell interferences, different duplexing modes, user-mobility scenarios with statistical CSI, adopting multi-user scheduling techniques~\cite{6422292, Yashvanth_TSP_2023}, etc.}
\color{black}
\bibliographystyle{IEEEtran}
\bibliography{IEEEabrv,References}

\end{document}